
\documentclass[sigplan,screen]{acmart}\settopmatter{}

\newif\ifextended
\extendedtrue

\setcopyright{acmlicensed}
\acmPrice{15.00}
\acmDOI{10.1145/3314221.3314614}
\acmYear{2019}
\copyrightyear{2019}
\acmISBN{978-1-4503-6712-7/19/06}
\acmConference[PLDI '19]{Proceedings of the 40th ACM SIGPLAN Conference on Programming Language Design and Implementation}{June 22--26, 2019}{Phoenix, AZ, USA}
\acmBooktitle{Proceedings of the 40th ACM SIGPLAN Conference on Programming Language Design and Implementation (PLDI '19), June 22--26, 2019, Phoenix, AZ, USA}

 
\bibliographystyle{ACM-Reference-Format}


\usepackage{booktabs}   
\usepackage{subcaption} 
\usepackage{amsmath}
\usepackage{amssymb}
\usepackage{pifont}
\usepackage{algorithm}
\usepackage{xspace}
\usepackage{soul, xcolor}
\usepackage{algorithmicx}
\usepackage[noend]{algpseudocode}
\usepackage{amsthm}
\usepackage{mathtools}
\usepackage{subcaption}
\usepackage{enumitem}
\usepackage{url}
\newcommand{\real}{\ensuremath{\mathbb{R}}}
\newcommand{\relu}{\ensuremath{\mathrm{ReLU}}}
\newcommand{\ai}{${\rm AI}^2$}
\newcommand{\reluplex}{{\sc Reluplex}\xspace}
\newcommand{\reluval}{{\sc ReluVal}\xspace}
\newcommand{\powerset}{\mathcal{P}}
\newcommand{\network}{\mathcal{N}}
\newcommand{\weights}{W}
\renewcommand{\vec}[1]{\overline{#1}}
\newcommand{\bias}{\vec{b}}
\newcommand{\layer}{L}
\newcommand{\transpose}{\top}
\newcommand{\ball}{\mathcal{B}}
\newcommand{\domain}{\alpha}
\newcommand{\objective}{\mathcal{F}}

\newcommand{\policy}{\pi}
\newcommand{\apolicy}{\pi^\domain}
\newcommand{\ipolicy}{\pi^I}

\algnewcommand\Input{\textbf{input: }}
\algnewcommand\Output{\textbf{output: }}

\DeclareMathOperator*{\argmin}{arg\,min}

\newcommand{\todo}[1]{{\color{red}{#1}}}

\makeatletter
\newtheorem*{rep@theorem}{\rep@title}
\newcommand{\newreptheorem}[2]{
  \newenvironment{rep#1}[1]{
    \def\rep@title{#2 \ref{##1}}
    \begin{rep@theorem}}
      {\end{rep@theorem}}}
\makeatother

\newtheorem{assumption}{Assumption}
\newreptheorem{assumption}{Assumption}
\newreptheorem{theorem}{Theorem}
\newreptheorem{definition}{Definition}

\newcommand{\toolname}{{\sc Charon}\xspace}

\setcounter{MaxMatrixCols}{16}

\begin{document}

\setstcolor{red}

\title[Optimization and Abstraction: A Synergistic Approach\dots]{Optimization and Abstraction: \\ A Synergistic Approach for Analyzing
  Neural Network Robustness}



\author{Greg Anderson}
\affiliation{
  \institution{The University of Texas at Austin}            
  \city{Austin}
  \state{Texas}
  \country{USA}                    
}
\email{ganderso@cs.utexas.edu}          

\author{Shankara Pailoor}
\affiliation{
  \institution{The University of Texas at Austin}           
  \city{Austin}
  \state{Texas}
  \country{USA}                   
}
\email{spailoor@cs.utexas.edu}         

\author{Isil Dillig}
\affiliation{
  \institution{The University of Texas at Austin}            
  \city{Austin}
  \state{Texas}
  \country{USA}                    
}
\email{isil@cs.utexas.edu}          

\author{Swarat Chaudhuri}
\affiliation{
  \institution{Rice University}            
  \city{Houston}
  \state{Texas}
  \country{USA}                    
}
\email{swarat@rice.edu}          

\begin{abstract}
  \vspace{-0.05in}
	In recent years, the notion of \emph{local robustness} (or \emph{robustness} for short) has emerged as a desirable property of deep neural networks. Intuitively, robustness means that small perturbations to an input do not cause the network to perform misclassifications. In this paper, we present a novel algorithm for verifying robustness properties of neural networks. 
Our method synergistically combines  gradient-based optimization methods for
counterexample search with abstraction-based proof search to obtain a sound and
($\delta$-)complete decision procedure.
Our method also employs a data-driven approach to learn a verification policy
that guides abstract interpretation during proof search. We have implemented the
proposed approach in a tool called \toolname\ and experimentally evaluated it on hundreds of benchmarks. Our experiments show that the proposed approach {significantly} outperforms three state-of-the-art tools, namely {\sc AI}$^2$, {\sc Reluplex}, and {\sc Reluval}.
\vspace{-0.07in}
\end{abstract}

\begin{CCSXML}
  <ccs2012>
  <concept>
  <concept_id>10003752.10010124.10010138.10011119</concept_id>
  <concept_desc>Theory of computation~Abstraction</concept_desc>
  <concept_significance>500</concept_significance>
  </concept>
  <concept>
  <concept_id>10010147.10010257.10010293.10010294</concept_id>
  <concept_desc>Computing methodologies~Neural networks</concept_desc>
  <concept_significance>300</concept_significance>
  </concept>
  </ccs2012>
\end{CCSXML}

\ccsdesc[500]{Theory of computation~Abstraction}
\ccsdesc[300]{Computing methodologies~Neural networks}

\keywords{\vspace{-0.04in}Machine learning, Abstract Interpretation, Optimization, Robustness}  

\maketitle

\section{Introduction}
\label{sec:intro}

In recent years, deep neural networks (DNNs) have gained enormous popularity for a wide spectrum of 
applications, ranging from image recognition\cite{image-recognition-1, image-recognition-2} and malware detection \cite{malware-1,malware-2} to machine translation\cite{machine-translation}. 
Due to their surprising effectiveness in practice, deep learning has also found numerous applications in safety-critical systems, including self-driving cars~\cite{car-1, car-2}, unmanned aerial systems~\cite{uas}, and medical diagnosis~\cite{diagnosis}.

Despite their widespread use in a broad range of application domains, it is well-known that deep neural networks are vulnerable to \emph{adversarial counterexamples}, which are small perturbations to a network's input that cause the network to output incorrect labels~\cite{adversarial-1,adversarial-2}. For instance, Figure~\ref{fig:adversarial} shows two adversarial examples in the context of speech recognition and image classification. As shown in the top half of Figure~\ref{fig:adversarial}, two sound waves that are virtually indistinguishable are recognized as ``How are you?" and ``Open the door"  by a DNN-based speech recognition system~\cite{gong2018overview}. Similarly, as illustrated in the bottom half of the same figure, applying a tiny perturbation to a panda image causes a  DNN to misclassify the image as that of a gibbon.

It is by now well-understood that such adversarial counterexamples can pose
serious security risks~\cite{advsurvey}.  Prior
work~\cite{BastaniILVNC16,reluplex,deepxplore,veriviz,ai2} has advocated the property of
\emph{local robustness} (or \emph{robustness} for short) for protecting neural
networks against attacks that exploit such adversarial examples. To understand
what robustness means, consider a neural network that classifies an input $x$ as
having label $y$. Then, local robustness requires that all inputs $x'$ that are
``very similar''~\footnote{For example, ``very similar'' may mean $x'$ is within some $\epsilon$ distance from $x$, where distance can be measured using different metrics such as $L^2$ norm.} to $x$ are also classified as having the same label $y$.


\begin{figure}[t]
\includegraphics[width=0.8\columnwidth]{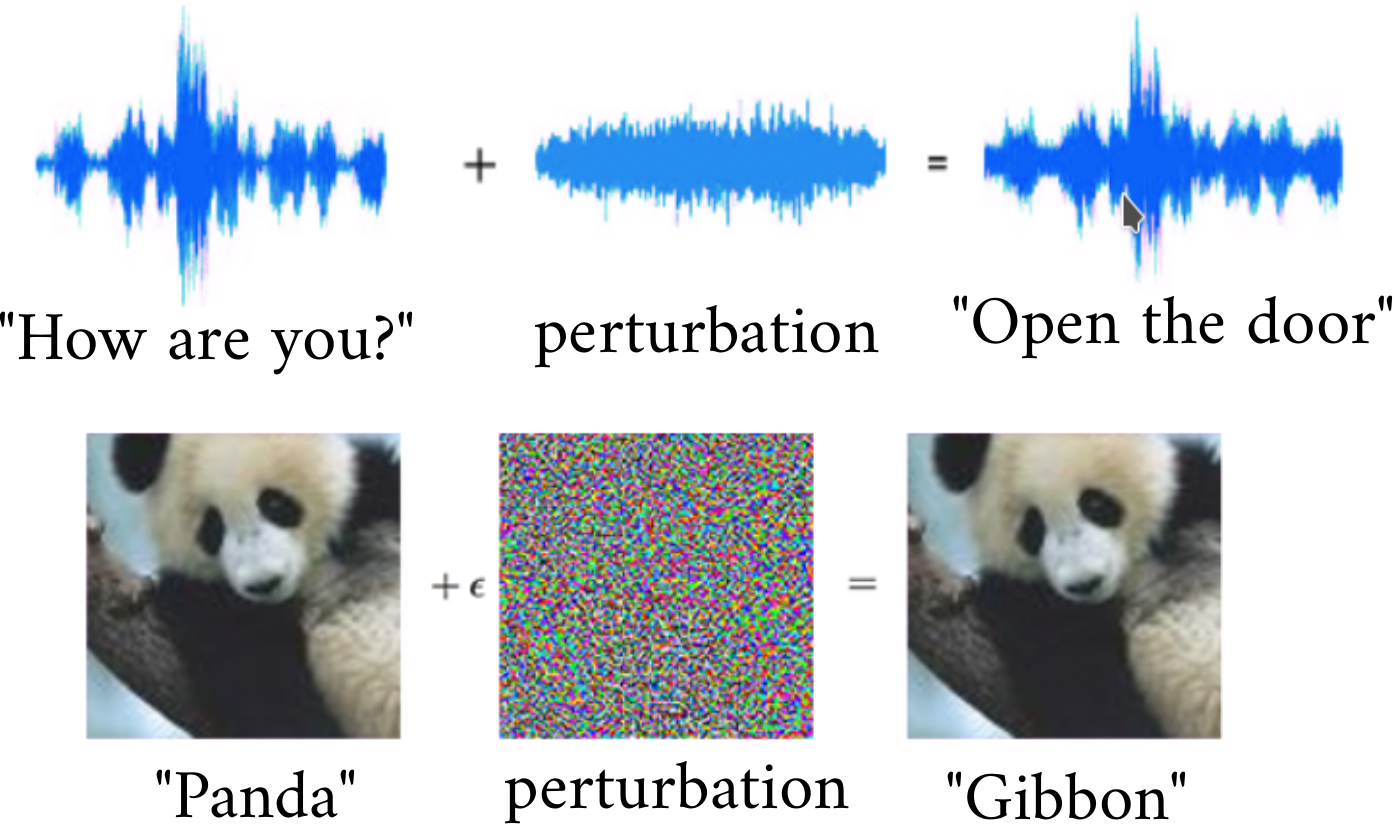}
\vspace{-0.1in}
\caption{ Small perturbations of the input cause the sound wave and the image to be misclassified.} \label{fig:adversarial}
\vspace{-0.15in}
\end{figure}

Due to the growing consensus on the desirability of robust neural networks, many recent efforts have sought to algorithmically analyze robustness of networks. Of these, one category of methods seeks to discover {adversarial counterexamples} using numerical optimization techniques such as Projected Gradient Descent (PGD)~\cite{madry2017towards} and Fast Gradient Sign Method (FGSM)~\cite{GoodfellowSS14}. A second category  aims to \emph{prove}  network robustness using symbolic methods ranging from SMT-solving~\cite{reluplex} to abstract interpretation~\cite{ai2,reluval}. 
These two categories of methods  have complementary advantages. Numerical
counterexample search methods can quickly find violations, but are ``unsound'',
in that they fail to offer certainty about a network's robustness. In contrast,
proof search methods are sound, but they are either incomplete~\cite{ai2} (i.e., suffer from false positives) or do not scale well~\cite{reluplex}.

In this paper, we present a new technique for robustness analysis of neural networks that combines the best of proof-based and optimization-based methods. Our approach combines formal reasoning techniques based on \emph{abstract interpretation} with continuous and black-box optimization techniques from the machine learning community. This tight coupling of optimization and abstraction has two key advantages: First, optimization-based methods can efficiently search for counterexamples that prove the violation of the robustness property, allowing efficient falsification in addition to verification. 
Second, 
optimization-based methods provide a data-driven way to automatically refine the
abstraction when the property can be neither falsified nor proven. 


The workflow of our approach is shown schematically in Figure~\ref{fig:cegar} and consists of 
both a \emph{training} and a \emph{deployment} phase. During the
training phase, our method uses black-box optimization
techniques to learn a so-called \emph{verification policy} $\policy_\theta$ from
a representative set of training problems. Then
the deployment phase
uses the learned verification policy to guide how gradient-based counterexample search should be coupled with proof synthesis for solving  previously-unseen verification problems.

\begin{figure}[h]
  \includegraphics[width=0.9\columnwidth]{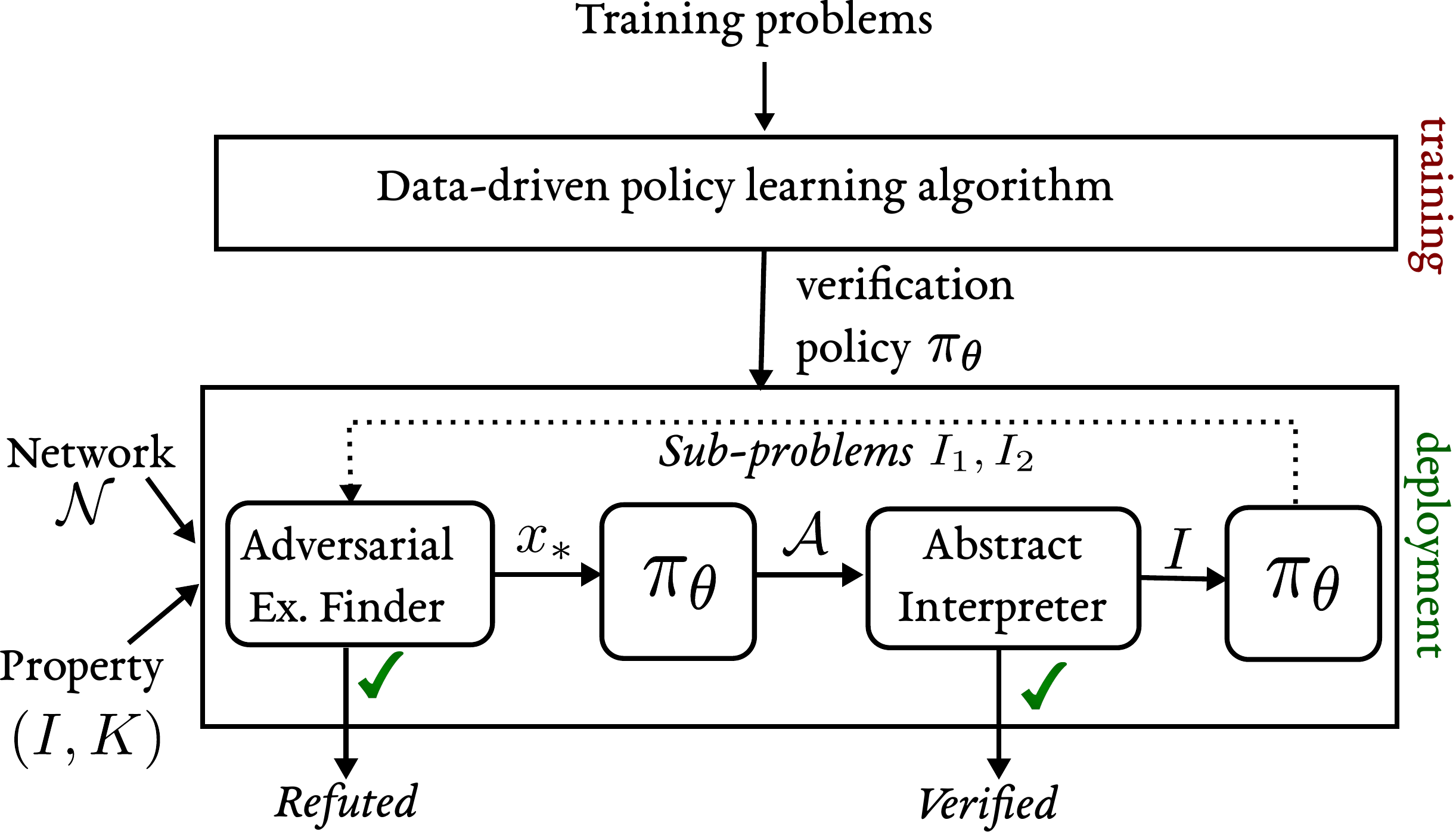}
  \vspace{-0.1in}
  \caption{Schematic overview of our approach.}
  \label{fig:cegar}
  \vspace{-0.15in}
\end{figure}

The input to the deployment phase of our algorithm consists of a neural network $\network$ as well as a robustness specification $(I, K)$ which states that all points in the \emph{input region} $I$ should be classified as having label $K$. Given this input, our algorithm first uses gradient-based optimization to search for an adversarial counterexample, which is a point in the input region $I$ that is classified as having  label $K' \neq K$. 
If we can find such a counterexample, then the algorithm terminates with a witness
to the violation of the property.  However, even if the optimization procedure fails to find a true counterexample, the result $x_*$ of the optimization problem can still convey useful information. In particular, our method uses the learned verification policy $\policy_\theta$ to map 
all available information, including $x_*$, to a promising abstract domain $\mathcal{A}$ to use when attempting to verify the property. If the property can be verified using domain $\mathcal{A}$, then the algorithm successfully terminates with a robustness proof.

In cases where the property is neither verified nor refuted,  our algorithm uses the verification policy $\policy_\theta$ to \emph{split} the input region $I$ into two sub-regions $I_1, I_2$ such that $I = I_1 \cup I_2$ and tries to verify/falsify the robustness of each region separately. This form of refinement is useful for both the abstract interpreter as well as the counterexample finder. In particular, since gradient-based optimization methods are not guaranteed to find a global optimum, splitting the input region into smaller parts makes it more likely that the optimizer can find an adversarial counterexample. Splitting the input region is similarly useful for the abstract interpreter because the amount of imprecision introduced by the abstraction is correlated with the size of the input region.

As illustrated by the discussion above, a key part of our verification algorithm
is the use of a  policy $\policy_\theta$ to decide (a) which abstract domain to
use for verification, and (b) how to split the input region into two
sub-regions. Since there is no obvious choice for either the abstract domain or
the splitting strategy, our algorithm takes a \emph{data-driven approach} to
learn a suitable verification policy $\policy_\theta$ during a training phase.
During this training phase,  we use a black-box optimization technique known as Bayesian optimization to
learn values of $\theta$ that lead to strong performance on a representative set of verification problems.  Once this phase is over, the algorithm can be deployed on networks and properties that have not been encountered during training.

Our proposed verification algorithm has some appealing theoretical properties in
that  it is both sound and $\delta$-complete~\cite{delta-complete}. That is, if
our method verifies the property $(I, K)$ for network $\network$, this means
that $\network$ does indeed classify all points in the input region $I$ as
belonging to class $K$. Furthermore, our method is $\delta$-complete in the
sense that, if the property is falsified with counterexample $x_*$, this means
that $x_*$ is within $\delta$ of being a true counterexample.

We have implemented the proposed method in a tool called \toolname\ \footnote{\underline{C}omplete \underline{H}ybrid \underline{A}bstraction \underline{R}efinement and
\underline{O}ptimization  for \underline{N}eural Networks.}, and used it to
analyze hundreds of robustness properties of ReLU neural networks, including both
fully-connected and convolutional neural networks, trained on the MNIST~\cite{LeNet} and CIFAR~\cite{CIFAR} datasets. We have also compared our method against state-of-the-art network verification tools (namely, \reluplex, \reluval, and \ai) and shown that our method outperforms all prior verification techniques, either in terms of accuracy or performance or both.  In addition, our experimental results reveal the benefits of learning to couple proof search and optimization.


In all, this paper makes the following key contributions:
\begin{itemize}[leftmargin=*]
  \item We present a new sound and $\delta$-complete  decision procedure that  combines abstract interpretation and 
    gradient-based counterexample search to
prove robustness of deep neural networks.
\item We describe a method for automatically learning  verification policies that direct counterexample search and abstract interpretation steps carried out during the analysis. 
\item We conduct an extensive experimental evaluation on hundreds of benchmarks and show that our method significantly outperforms state-of-the-art tools 
for verifying neural networks. For example, our method solves $2.6\times$ and $16.6\times$ more benchmarks compared to \reluval\ and \reluplex\ respectively.
\end{itemize}

\paragraph{Organization.} The rest of this paper is organized as follows. In
Section~\ref{sec:background}, we provide necessary background on neural
networks, robustness, and abstract interpretation of neural networks. In
Section~\ref{sec:refinement}, we present our algorithm for checking robustness,
given a verification policy (i.e., the deployment phase).
Section~\ref{sec:learning} describes our data-driven approach for learning a
useful verification policy from training data (i.e., the training phase), and
Section~\ref{sec:theorems} discusses the theoretical properties of our
algorithm. 
Finally, Sections~\ref{sec:impl} and ~\ref{sec:eval} describe our implementation and experimental evaluation, Section~\ref{sec:related} discusses related work, and Section~\ref{sec:conclusion} concludes.

\section{Background}
\label{sec:background}

In this section, we provide some background on neural networks and robustness. 

\subsection{Neural Networks}

A neural network is a 
function $\network:\real^n \to
\real^m$ of the form $\layer_1 \circ \sigma_1 \circ \dots \circ \sigma_{k -1}
\circ \layer_k$, where each $\layer_i$ is a differentiable {\em layer} 
and
each $\sigma_i$ 
is a non-linear, almost-everywhere differentiable \emph{activation
function}.  While there are many types of activation functions, the most popular choice in modern neural networks is the \emph{rectified linear unit (ReLU)}, defined as $\relu(x) = \max(x,0)$. This function is applied element-wise to the output of each layer
except the last. In this work, we consider feed-forward and convolutional networks, which
have the additional property of being Lipschitz-continuous~\footnote{Recall that a
  function is  Lipschitz-continuous if there exists a positive real
  constant $M$ such that, for all  $x_1, x_2$, we have $|f(x_1) -
  f(x_2)| \leq M |x_1 - x_2|$}.

For the purposes of this work, we think of each layer $\layer_i$ as an affine
transformation $(\weights, \bias)$ where $\weights$ is a \emph{weight matrix}
and $\bias$ is a \emph{bias vector}.  Thus, the output of the $i$'th layer is
computed as $\vec{y} = \weights \vec{x} + \bias$. We note that both
\emph{fully-connected} as well as \emph{convolutional layers} can be expressed
as affine transformations~\cite{ai2}. While our approach can also handle other types of layers (e.g., max pooling), we only focus on affine transformations to simplify presentation.

In this work, we consider networks used for classification tasks. That is, given some
input $x \in \real^{n}$, we wish to put $x$ into one of $m$
\emph{classes}. The output of the network $y \in \real^m$ is interpreted as a
vector of \emph{scores}, one for each class. Then $x$ is put into the class with
the highest score. More formally, given some input $x$, we say the network
$\network$ assigns $x$ to a class $K$ if $(\network(x))_K > (\network(x))_j$ for
all $j \neq K$.

\begin{figure}
\includegraphics[scale=0.35]{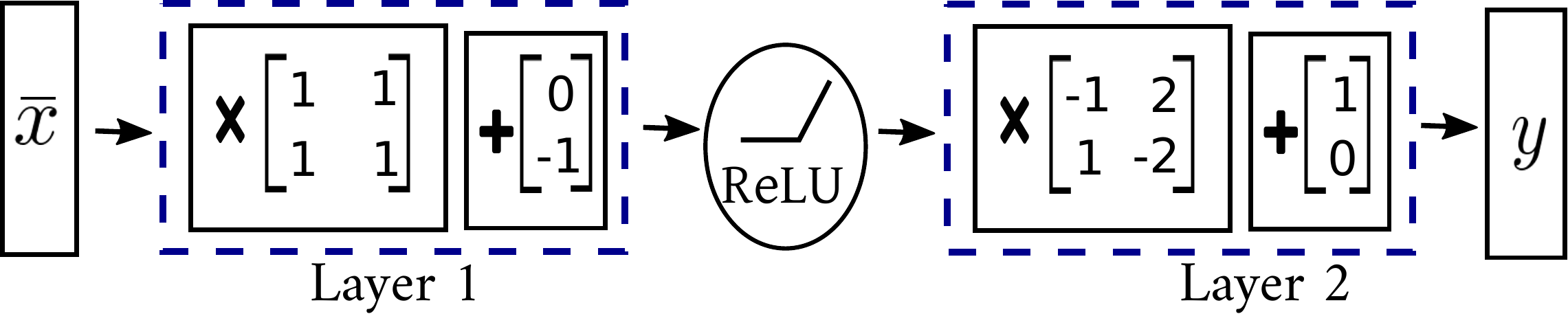}
\vspace{-0.1in}
\caption{A feedforward network implementing XOR}\label{fig:xor}
\vspace{-0.1in}
\end{figure}

\begin{example} 
Figure~\ref{fig:xor} shows a 2-layer feedforward neural network implementing the XOR function. To see why this network
``implements'' XOR, consider the vector $[0 \ 0]^\transpose$. After applying the
affine transformation from the first layer, we obtain $[0 \
-1]^\transpose$. After applying ReLU, we get $[0 \ 0]^\transpose$. Finally,
after applying the affine transform in the second layer, we get
$[1 \ 0]^\transpose$. Because the output at index zero is greater than the
output at index one, the network will classify $[0 \ 0]^\transpose$ as a zero.
Similarly, this network classifies  both  $[0 \
1]^\transpose$ and $[1 \ 0]^\transpose$ as $1$ and  $[1 \
1]^\transpose$ as $0$.
\end{example}

\begin{figure*}[t]
  \includegraphics[width=\textwidth]{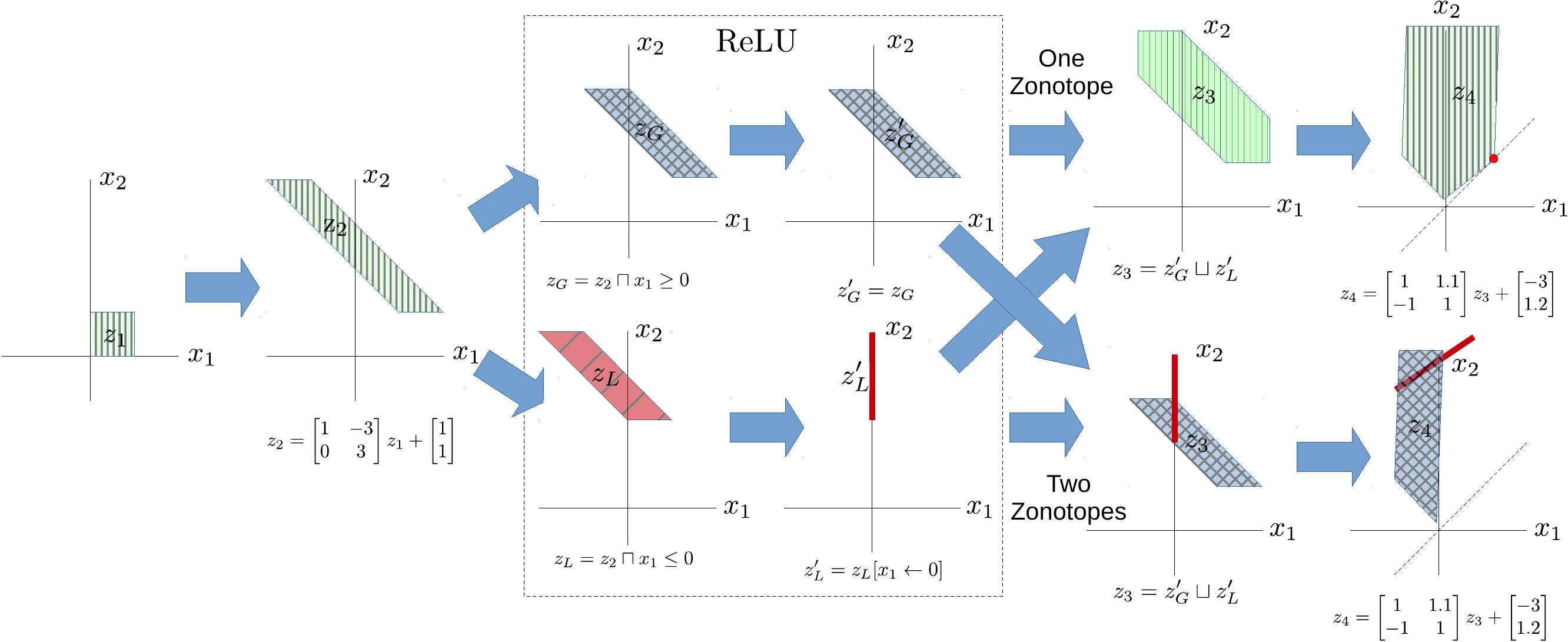}
  \vspace{-0.4in}
  \caption{Zonotope analysis of a neural network.}
  \label{fig:zonotope}
\end{figure*}

\subsection{Robustness}

(Local) robustness~\cite{BastaniILVNC16} is a key correctness property of neural
networks which requires that all inputs within some
region of the input space fall within the same region of the output space.
Since we focus on networks designed for classification tasks,  we
will define ``the same region of the output space'' to mean the region which
assigns the same class to the input. That is, a robustness property asserts that
a small change in the input cannot change the class assigned to that input.

More formally, a \emph{robustness property} is a pair $(I,K)$ with $I \subseteq
\real^{n}$ and $0 \le K \le m-1$. Here, $I$ defines some region of the input that
we are interested in and $K$ is the class into which all the inputs in $I$
should be placed. A network $\network$ is said to satisfy a robustness
property $(I,K)$ if for all $x \in I$, we have $(\network(x))_K >
(\network(x))_j$ for all $j \neq K$.

\begin{example}
  \label{ex:robust}
Consider the following network with two  layers, i.e.,
$\network(x) = W_2(\relu(W_1 x + b_1)) + b_2$ where:
\[W_1 = \begin{bmatrix} 1 \\ 2 \end{bmatrix} \quad b_1 = \begin{bmatrix} -1 \\
  1 \end{bmatrix} \quad W_2 = \begin{bmatrix} 2 & 1 \\ -1 & 1 \end{bmatrix}
  \quad b_2 = \begin{bmatrix} 1 \\ 2 \end{bmatrix}\]
For the input $x=0$, we have $\network(0) = [ 1 \ 3 ]^\transpose$; thus, the
network outputs label 1 for input $0$. Let $I = [-1, 1]$ and $K = 1$. Then for all $x \in I$, 
the output of $\network$ is of the form $[a+1\ a+2]^T$
for some $a \in [0,3]$.
Therefore, the network classifies every point in $I$ as belonging to class 1, meaning that the network is robust in $[-1,1]$. On the other hand,
suppose we extend this interval to $I' = [-1,2]$. Then $\network(2) = [ 8
  \ 6 ]^\transpose$, so $\network$ assigns input  $2$ as belonging to class 0. Therefore $\network$ is \emph{not} robust in the input region
$[-1,2]$.
\end{example}

\subsection{Abstract Interpretation for Neural Networks}\label{sec:ai2}

In this paper, we build on the prior \ai\ work~\cite{ai2} for analyzing neural networks 
using the framework of \emph{abstract interpretation}~\cite{cousot77}. \ai\ allows analyzing neural networks using a variety of numeric abstract domains, including intervals (boxes)~\cite{cousot77}, polyhedra~\cite{polyhedra}, and zonotopes~\cite{zonotope}. In addition, \ai\ also supports \emph{bounded powerset domains} ~\cite{cousot77}, which essentially allow a bounded number of disjunctions in the abstraction. Since the user can specify any number of disjunctions, there are \emph{many} different abstract domains to choose from, and the precision and scalability of the analysis crucially depend on one's choice of the abstract domain.



The following example illustrates a robustness property that can be verified using the bounded zonotope domain with two disjuncts but not with intervals or plain zonotopes:

\begin{example}
  \label{ex:zonotope}
  Consider a network defined as:
  \[\network(x) = \begin{bmatrix} 1 & 1.1 \\ -1 & 1 \end{bmatrix}
  \relu\left(\begin{bmatrix} 1 & -3 \\ 0 & 3 \end{bmatrix} x + \begin{bmatrix} 1
  \\ 1 \end{bmatrix}\right) + \begin{bmatrix} -3 \\ 1.2 \end{bmatrix}\]
  As in the previous example,  the first index in the
  output vector corresponds to class $A$ and the second index corresponds to 
  class $B$. Now, suppose we want to verify that for all $x \in
  [0,1]^2$, the network assigns class $B$ to $x$. 
  
  %
  Let us now analyze this network using the
  zonotope abstract domain, which overapproximates a given region using a zonotope (i.e., center-symmetric polytope).
  The analysis of this network using the zonotope domain is illustrated in Figure~\ref{fig:zonotope}.
  At first, the initial region is propagated through the affine transformation
  as a single zonotope. Then, this zonotope is split into two pieces, the blue
  (crosshatched)
  one for which $x_1 \ge 0$ and the red (diagonally striped) one for which $x_1 \le 0$. The
  ReLU transforms the red piece into a line. (We omit the ReLU over $x_2$
  because it does not change the zonotopes in this case.) After the ReLU, we
  show two cases: on top is the plain zonotope domain, and on the bottom is a
  powerset of zonotopes domain. In the plain zonotope domain, the abstraction after the ReLU is
  the join of the blue and red zonotopes, while in the powerset
  domain we  keep the blue and red zonotopes separate. The final images show
  how the second affine transformation affects all three  zonotopes.
  
This example illustrates that the propery cannot be verified using the plain
zonotope domain, but it \emph{can} be verified using the powerset of zonotopes
domain.   Specifically, observe that the green (vertically striped) zonotope at the top includes the point
	$[1.2 \ 1.2]^\transpose$ (marked by a dot), where the robustness specification is violated.
On the other hand, the blue and red zonotopes obtained using the powerset domain do not contain any unsafe points,
  so the property is verified using this more precise abstraction.
\end{example}

\section{Algorithm for  Checking Robustness}
\label{sec:refinement}


In this section, we describe our algorithm for checking robustness properties of
neural networks. Our algorithm  interleaves
optimization-based counterexample search with proof synthesis using abstraction refinement.
At a high level, abstract
interpretation provides an efficient way to verify properties but is subject to
false positives. Conversely, optimization based techniques for finding counterexamples
 are efficient for finding adversarial inputs, but suffer from false
negatives. Our algorithm combines the strengths of these two techniques by searching for both proofs and counterexamples at the same time
and using information from the counterexample search to guide proof
search.

Before we describe our algorithm in detail, we need to define our optimization
problem more formally. Given a network $\network$ and a robustness property $(I,
K)$, we can view the search for an
adversarial counterexample as the following optimization
problem:
\begin{equation}\label{eq:optimization}
x_* = \argmin_{x \in I} \left( \objective(x)\right)
\end{equation}
where our \emph{objective function} $\objective$ is defined as follows:
\begin{equation}\label{eq:objective}
  \mathcal{F}(x) = (\network(x))_K - \max_{j \neq K} (\network(x))_j
\end{equation}

Intuitively, the objective function $\objective$ measures the difference between the score for class $K$ and the maximum score among classes 
other than $K$. Note that if the value of this objective function is not
positive at some point $x$, then there exists some class which has a greater (or equal) score
than the target class, so  point $x$ constitutes a true adversarial counterexample.

The optimization problem from Eq.~\ref{eq:optimization} is clearly useful for searching for counterexamples to the robustness property. However, even if the solution $x_*$  is  not a true counterexample (i.e., $\objective(x_*) >0$), we can still use the result of the optimization problem to guide proof search.

Based on this intuition, we now explain our decision procedure, shown in 
Algorithm~\ref{alg:refinement}, in more detail. The {\sc Verify} procedure takes as input a
network $\network$, a robustness property $(I, K)$ to be verified, and
a so-called \emph{verification policy} $\policy_\theta$. 
As mentioned in Section~\ref{sec:intro}, 
the verification policy is used to decide what kind of abstraction to
use and how to split the input region when attempting to verify the
property. In more detail, the verification policy $\policy_\theta$,
parameterized by $\theta$, is a pair 
$(\apolicy_{\theta}, \ipolicy_{\theta})$, where $\apolicy_{\theta}$ is
a (parameterized) function 
known as the {\em domain policy} and $\ipolicy_{\theta}$ is a function
known as
the {\em partition policy}. 
The domain policy is 
used to decide which abstract domain to use, while 
the partition policy determines how to split the input region $I$ into two partitions to be analyzed separately. 
In general, it is quite difficult to write a good verification policy by hand
because there are many different parameters to tune and neural networks are
quite opaque and difficult to interpret.
In Section~\ref{sec:learning}, we explain how the parameters of these policy functions are learned from data.

\begin{algorithm}[t]
  \begin{algorithmic}[1]
    \Procedure{Verify}{$\network, I, K, \policy_\theta$}
    \vspace{0.05in}
    \Statex \Input{A network $N$, robustness property $(I, K)$
     and verification policy $\policy_\theta = (\apolicy_\theta, \ipolicy_\theta)$}
    \Statex \Output{Counterexample if $\network$ is not robust,
    or {\rm Verified}.}
    \vspace{0.05in}
    \State $x_* \gets \Call{Minimize}{I, \objective}$
    \If{$\objective(x_*) \le 0$}
    \State \textbf{return} $x_*$
    \EndIf
    \State $\mathcal{A} \gets \apolicy_\theta(\network,I,K,x_*)$
    \If{$\Call{Analyze}{\network, I, K, \mathcal{A}} = {\rm Verified}$}
    \State \textbf{return} {\rm Verified}
    \EndIf
    \State $(I_1, I_2) \gets \ipolicy_\theta(\network, I, K, x_*)$
    \State $r_1 \gets \Call{Verify}{\network, I_1, K, \policy_\theta}$
    \If{$r_1 \neq \text{Verified}$}
    \State \textbf{return} $r_1$
    \EndIf
    \State \textbf{return} $\Call{Verify}{\network, I_2, K,
    \policy_\theta}$
    \EndProcedure
  \end{algorithmic}
  \caption{The main algorithm}
  \label{alg:refinement}
\end{algorithm}

At a high-level, the {\sc Verify} procedure works as follows: First, we  try to find a counterexample to the given robustness property by solving the optimization problem from Eq.~\ref{eq:optimization} using the well-known \emph{projected gradient descent (PGD)} technique. If $\objective(x_*)$ is non-positive, we have found a true counterexample, so the algorithm produces $x_*$ as a witness to the violation of the property.
Otherwise, we try to verify the property 
using abstract interpretation. 

As mentioned in Section~\ref{sec:background}, there are many different abstract domains that can be used to verify the property, and the choice of the abstract domain has a huge impact on the success and efficiency of verification. Thus, our approach leverages the domain policy $\apolicy_\theta$ to choose a sensible abstract domain to use when attempting to verify the property. Specifically, the domain policy $\apolicy_\theta$ takes as input the network $\network$, the robustness specification $(I, K)$, and the solution $x_*$ to the optimization problem and chooses an abstract domain $\mathcal{A}$ that should be used for attempting to prove the property. If the property can be verified using domain $\mathcal{A}$, the algorithm terminates with a proof of robustness.

In cases where the property is neither verified nor refuted in the current iteration, the algorithm makes progress by splitting the input region $I$ into two disjoint partitions $I_1, I_2$ such that $I = I_1 \uplus I_2$. The intuition is that, even if we cannot prove robustness for the whole input region $I$, we may be able to increase analysis precision by performing a case split. That is, as long as all points in \emph{both} $I_1$ and $I_2$ are classified as having label $K$, this means that all points in $I$ are also assigned label $K$ since we have $I = I_1 \cup I_2$. In cases where the property is false, splitting the input region into two partition can similarly help adversarial counterexample search because gradient-based optimization methods do not always converge to a global optimum.

Based on the above discussion, the key question is how to partition the
input region $I$ into two regions $I_1, I_2$ so that each of $I_1, I_2$ has a
good chance of being verified or falsified. Since this question again does not have an obvious answer, we utilize our \emph{partition policy} $\ipolicy_\theta$ to make this decision. Similar to the domain policy, $\ipolicy_\theta$ takes as input the network, the property, and the solution $x_*$ to the optimization problem and ``cuts'' $I$ into two sub-regions $I_1$ and $I_2$ using a hyper-plane. Then, the property is verified if and only if the recursive call to {\sc Verify} is succsessful on both regions.

\begin{figure}
  \includegraphics[scale=0.25]{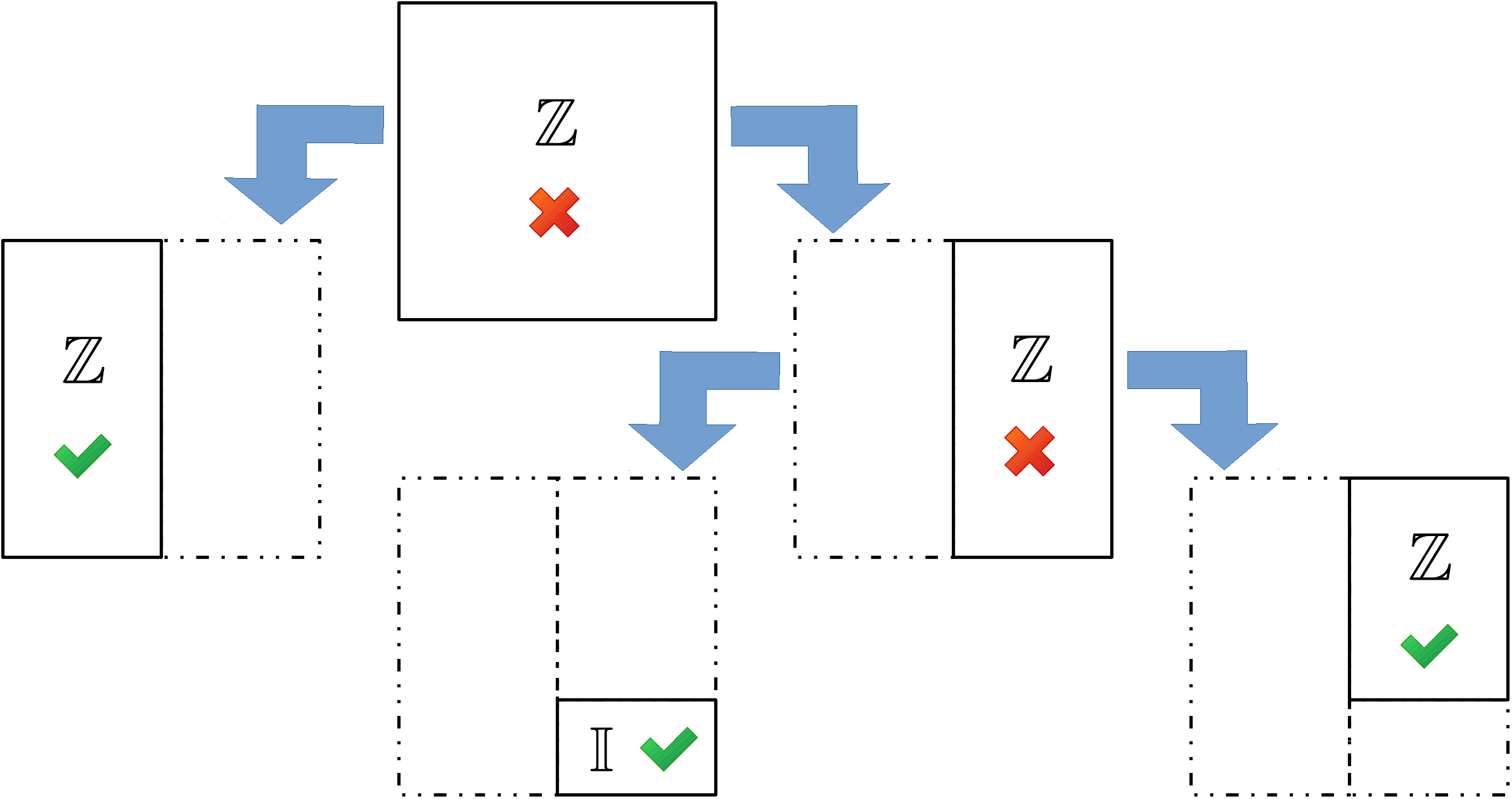}
  \vspace{-0.1in}
  \caption{The splits chosen for Example~\ref{ex:split}.}
  \label{fig:splitting}
  \vspace{-0.15in}
\end{figure}

\begin{example}
  \label{ex:split}
  Consider the XOR network from Figure~\ref{fig:xor} and the robustness property $([0.3,0.7]^2, 1)$. That is, for all inputs $\vec{x}$ with $0.3 \le x_1,x_2 \le 0.7$,  $\vec{x}$ should be assigned to class 1 (assume classes are zero-indexed).
We  now illustrate how Algorithm~\ref{alg:refinement} verifies this property using the plain interval 
and zonotope abstract domains. The process is illustrated in
Figure~\ref{fig:splitting}, which shows the splits made in each iteration  as well as the domain used
to analyze each region ($\mathbb{Z}$ denotes zonotopes, and $\mathbb{I}$ stands for
intervals).

 Algorithm~\ref{alg:refinement} starts by searching for an adversarial counterexample, but fails to find one since the property actually holds.
 Now, suppose that our domain policy $\apolicy_\theta$  chooses zonotopes to try to verify the property.
 Since the property cannot be verified using zonotopes, the call to \Call{Analyze}{} will fail. Thus, we now consult
 the partition policy $\ipolicy_\theta$ to split  this region into two
  pieces $I_1 = [0.3, 0.5] \times [0.3, 0.7]$ and
  $I_2 = [0.5, 0.7] \times [0.3, 0.7]$.

  Next, we recursively invoke Algorithm~\ref{alg:refinement} on both sub-regions
  $I_1$ and $I_2$. Again, there is no counterexample for either region, so we use the domain policy to choose 
  an abstract domain for each of  $I_1$ and $I_2$. Suppose that $\apolicy_\theta$ yields the zonotope domain for both $I_1$ and $I_2$. Using this domain, we can verify robustness in $I_1$ but not in $I_2$. Thus, for the second sub-problem, we again consult $\ipolicy_\theta$ to obtain two sub-regions $I_{2,1} = [0.5, 0.7] \times [0.3,
  0.42]$ and $I_{2,2} = [0.5, 0.7] \times [0.42, 0.7]$ and determine using
  $\apolicy_\theta$ that $I_{2,1}, I_{2,2} $ should be analyzed using intervals
  and zonotopes respectively. Since robustness can be verified using these
  domains, the algorithm successfully terminates. Notice that  the three verified subregions cover the entire initial
  region.
\end{example}

\section{Learning a Verification Policy}
\label{sec:learning}


As described in Section~\ref{sec:refinement}, our decision procedure for checking robustness uses 
a verification policy $\policy_\theta = (\apolicy_\theta,
\ipolicy_\theta)$  to  choose a suitable abstract domain and an input
partitioning strategy. In this section, we discuss  our policy
representation and  how to learn values of  $\theta$ that lead to
good performance.

\subsection{Policy Representation}\label{sec:policy-representation}

In this work, we implement verification  policies $\apolicy_\theta$ and $\ipolicy_\theta$ using a function of the following shape:
\begin{equation}\label{eq:policy-rep}
  \varphi(\theta \, \rho(\network, I, K, x_* ))
\end{equation}
where $\rho$ is a \emph{featurization function} that extracts a feature vector from the input,  $\varphi$ is a \emph{selection} function that converts a
  real-valued vector to a suitable output (i.e., an abstract domain
 for $\apolicy_\theta$ and the two subregions for $\ipolicy_\theta$), and $\theta$ corresponds to a parameter matrix that is automatically learned from a representative set of training data.
We discuss our featurization and selection functions in this sub-section and explain how to learn parameters $\theta$ in the next sub-section.

\paragraph{Featurization.} \label{sec:featurization} 
%
%
As standard in machine learning, we need  to convert the input $\iota = (\network, I, K, x_*)$ to a feature
vector.  
Our choice of features is influenced by our insights about the
verification problem, and we deliberately use a small number of features
for two reasons:  First, a large number of dimensions can lead to overfitting and
poor generalization (which is especially an issue when
training data is fairly small). Second,
a high-dimensional feature vector leads to a more difficult
learning problem, and contemporary Bayesian optimization engines
 only scale to a few tens of dimensions.  

Concretely, 
our featurization function considers several kinds of information, including: 
(a) the behavior of the network near $x_*$, (b) where $x_*$ falls
in the input space, and (c) the size of the input space. Intuitively, we expect
that (a) is useful because as $x_*$ comes closer to violating the specification,
we should need a more 
precise abstraction, while (b) and (c) inform how we should
split the input region during refinement. Since the precision of the
analysis is correlated with how the split is performed, we found the
same featurization function to work well for both policies
$\apolicy_\theta$ and $\ipolicy_\theta$.  In Section~\ref{sec:impl},
we discuss the exact features used in our implementation.

\paragraph{Selection function.}  Recall that the purpose of the selection function $\varphi$ is to convert $\theta \rho(\iota)$ to a "strategy", which is an abstract domain for $\apolicy$ and a hyper-plane  for $\ipolicy$.  Since the strategies for these two functions are quite different, we use two different selection functions, denoted $\varphi_\alpha, \varphi_I$ for the domain and partition policies respectively. 

The selection function $\varphi_\alpha$ is quite simple and maps $\theta
\rho(\iota)$ to a tuple $(d, k)$ where $d$ denotes the base abstract domain
(either intervals $\mathbb{I}$ or zonotopes $\mathbb{Z}$ in our implementation)
and $k$ denotes the number of disjuncts. Thus, $(\mathbb{Z}, 2)$ denotes the
powerset of zonotopes abstract domain, where the maximum number of disjuncts is
restricted to $2$, and $(\mathbb{I}, 1)$ corresponds to the standard interval
domain.

In the case of the partition policy $\ipolicy$, the  selection function $\varphi_I$  is also a
tuple $(d, c)$ where $d$ is the dimension along which we split the input region
and $c$ is the point at which  to split. In other words, if $\varphi_I(\theta \rho(\iota)) = (d, c)$, this means that we split the input region $I$ using the hyperplane $x_d = c$.  Our selection function $\varphi_I$ does not consider arbitrary hyperplanes of the form $c_1 x_1 + \ldots c_n x_n = c$ because splitting the input region along an arbitrary hyperplane may result in sub-regions that are not expressible in the chosen abstract domain. In particular, this is true for both the interval and zonotope domains used in our implementation.

\subsection{Learning using Bayesian Optimization}\label{sec:offline}

As made evident by Eq.~\ref{eq:policy-rep}, the parameter matrix $\theta$ has a huge impact on the choices made by our verification algorithm. However, manually coming up with these parameters is very difficult because 
the right choice of coefficients depends on both the property, the
network, and the underlying abstract interpretation engine. In this
work, we take a data-driven approach to solve this problem and use
{\em Bayesian optimization} to learn a parameter matrix
$\theta$ that leads to optimal performance by the verifier on a set of
training problems. 

\paragraph{Background on Bayesian optimization.}
Given a function $F: \real^n \to \real$, the goal of Bayesian
optimization is to find a vector $\vec{x^*} \in \real^n$ that
maximizes $F$. Importantly, Bayesian optimization does not assume that $F$ is
differentiable; also, in practice, it can achieve reasonable performance without having
to evaluate $F$ very many times. In our setting, the function $F$ represents the
performance of a verification policy. This function is not necessarily
differentiable in the parameters of the verification policy, as a
small perturbation to the policy parameters can lead to the choice of
a different domain. Also, evaluating the function requires
an expensive
round of abstract interpretation. For these reasons, Bayesian
optimization is a good fit to our learning problem. 

At a high level, Bayesian optimization repeatedly samples inputs until a time
limit is reached and returns the best input found so far. However, rather than
sampling inputs {at random}, the key part of Bayesian optimization is to predict
what input is useful to sample next. Towards this goal, the algorithm uses (1) a
\emph{surrogate model} $\mathcal{M}$ that expresses our current belief about
$F$, and (b) an \emph{acquisition function} $\mathcal{A}$ that employs
$\mathcal{M}$ to decide the most promising input to sample in the next
iteration. The surrogate model $\mathcal{M}$ is initialized to capture prior
beliefs about $F$ and is updated based on observations on the sampled points.
The acquisition function $\mathcal{A}$  is chosen to trade off exploration and
exploitation where "exploration" involves sampling points with high
uncertainty, and "exploitation" involves sampling points where $\mathcal{M}$ predicts a high value of $F$. Given model $\mathcal{M}$ and function  $\mathcal{A}$, Bayesian optimization  samples the most promising input $\vec{x}$ according to $\mathcal{A}$, evaluates $F$ at $\vec{x}$, and updates the statistical model $\mathcal{M}$ based on the observation $F(\vec{x})$. This process is repeated until a time limit is reached, and the best  input  sampled so far is returned as the optimum. We refer the reader to ~\cite{bayesian-opt}  for a more detailed overview of Bayesian optimization.

\paragraph{Using Bayesian optimization.} In order to apply Bayesian optimization
to our setting, we first need to define what function we want to optimize.
Intuitively, our objective function should estimate the quality of the analysis
results based on  decisions made by verification policy $\policy_\theta$.
Towards this goal, we fix a set $S$ of representative training problems that can
be used to estimate the quality of $\policy_\theta$. Then, given a parameters
matrix $\theta$, our objective function $F$ calculates a score based on (a) how
many benchmarks in $S$ can be successfully solved within a given time limit, and
(b) how long it takes to solve the benchmarks in $S$. More specifically, our
objective function $F$ is parameterized by a time limit $t \in \real$ and penalty
$p \in \real$ and calculates the score for a matrix $\theta$ as follows:
\[
F(\theta) = -\sum_{s \in S} \emph{cost}_\theta(s)
\vspace{-0.1in}
\]
where:
\[
\emph{cost}_\theta(s) = \left \{
\begin{array}{ll}
\emph{Time}(\emph{Verify}_\theta(s)) & {\rm if} \ s \ {\rm solved} \ {\rm within}  \ t \\
p \cdot t & {\rm otherwise}
\end{array}
\right .
\]

Intuitively,  $p$ controls how much we want to penalize failed
verification attempts -- i.e., the higher the value of $p$, the more biased the
learning algorithm  is towards more precise (but potentially slow) strategies.
On the other hand, small values of $p$ bias learning towards  strategies that
yield fast results on the solved benchmarks, even if some of the benchmarks cannot be solved within the given time limit.~\footnote{In our implementation, we choose $p=2$, $t=700s$.}

In order to apply Bayesian optimization to our problem, we also need to choose a suitable  acquisition function and surrogate mode. Following standard practice, we adopt a \emph{Gaussian process}
\cite{gaussian-process} as
our surrogate model and use \emph{expected improvement} \cite{expected-improvement} for the acquisition function.

\section{Termination and Delta Completeness}
\label{sec:theorems}

In this section, we discuss some theoretical  properties of our verification algorithm, including soundness, termination, and completeness. To start with, it is easy to see that Algorithm~\ref{alg:refinement} is sound, as it only returns "Verified" once it establishes that \emph{every} point in the input space is classified as $K$. This is the case because every time we split the input region $I$ into two sub-regions $I_1, I_2$, we ensure that $I = I_1 \cup I_2$, and the underlying abstract interpreter is assumed to be sound. However, it is less clear whether Algorithm~\ref{alg:refinement} always terminates or whether it has any completeness guarantees.

Our first observation is that the {\sc Verify} procedure,
\emph{exactly} as presented in Algorithm~\ref{alg:refinement}, does
{not} have termination guarantees under realistic assumptions about
the optimization procedure used for finding adversarial
counterexamples. Specifically, if the procedure {\sc Minimize} invoked
at line 2 of Algorithm~\ref{alg:refinement} returned a \emph{global}
minimum, then we could indeed guarantee
termination.~\footnote{However, if we  make this assumption, the
  optimization procedure itself would be a sound and complete decision
  procedure for verifying robustness!} However,
since gradient-based optimization procedures do not have this property, Algorithm~\ref{alg:refinement} may not be able to find a true adversarial counterexample even as we make the input region infinitesimally small. Fortunately, we can guarantee termination and a form of completeness (known as $\delta$-completeness) by making a very small change to Algorithm~\ref{alg:refinement}.

To guarantee termination of our verification algorithm, we will make the following slight change to line 3 of Algorithm~\ref{alg:refinement}: Rather than checking  $
  \objective(x_*) \le 0
$
(for $\objective$ as defined in Eq.~\ref{eq:objective}) we will instead check: 
\begin{equation}\label{eq:modification}
\objective(x_*) \le \delta
\end{equation}
While this modification can cause our verification algorithm to produce false positives under certain  pathological conditions, the analysis can be made as precise as necessary by picking a value of $\delta$ that is arbitrarily close to $0$.  Furthermore, under this change, we can now prove  termination under some mild and realistic assumptions. In order to formally state these assumptions, we first introduce
the following notion of the \emph{diameter} of a region:
\begin{definition}
  For any set $X \subseteq \real^n$, its \emph{diameter} $D(X)$ is defined as $\sup\{\|x_1 - x_2\|_2 \mid x_1,x_2 \in X\}$
  if this value exists. Otherwise the set is said to have infinite diameter.
  \label{def:diameter}
\end{definition}

We now use this notion of diameter to state two key assumptions that are needed to prove termination:

\begin{assumption}
  There exists some $\lambda \in (0,1)$   such that for any network $\network$, input region $I$, and point $x_* \in
  I$, if $ \ipolicy(N, I, x_*) = (I_1, I_2) $, then $D(I_1) < \lambda
  D(I)$ and $D(I_2) < \lambda D(I)$.
  \label{asm:split-reduces-size}
\end{assumption}

Intuitively, this assumption states that
the two resulting subregions after splitting are smaller than the original region by some
factor $\lambda$. It is easy to enforce this condition on any partition policy 
by choosing a hyper-plane $x_d = c$ where $c$ is not at the boundary of the input region.



Our second assumption concerns the abstract domain:
\begin{assumption}
  Let $\network^\#$ be the abstract transformer representing a network
  $\network$. 
  For a given input region $I$, 
  we assume there exists some $K_\network \in \real$ such
  that $D(\gamma(\network^\#(\alpha(I)))) < K_\network D(I)$.
  \label{asm:finite-overapprox}
\end{assumption}
This assumption asserts that the Lipschitz continuity of the network extends to
its abstract behavior. Note that this assumption holds in several numerical
domains including intervals, zonotopes, and powersets thereof.

\begin{theorem}
Consider the variant of Algorithm~\ref{alg:refinement} where the predicate at line 3 is replaced with
Eq.~\ref{eq:modification}. Then, if the input region has finite diameter, the verification algorithm always terminates
under Assumptions~\ref{asm:split-reduces-size} and~\ref{asm:finite-overapprox}.
\label{thm:termination}
\ifextended
  \footnote{Proofs for all theorems can be found in the appendix.}
  \else
\footnote{Proofs for all theorems can be found in the
  extended version of the paper on arXiv.}
  \fi
\end{theorem}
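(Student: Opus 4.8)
The plan is to argue termination by showing that the recursion tree built by {\sc Verify} cannot be infinite. Since each recursive call splits the input region, and every split strictly decreases the diameter by a factor $\lambda \in (0,1)$ by Assumption~\ref{asm:split-reduces-size}, any infinite branch of the recursion tree would correspond to a nested sequence of regions $I \supseteq I^{(1)} \supseteq I^{(2)} \supseteq \cdots$ with $D(I^{(k)}) < \lambda^k D(I) \to 0$. The key is then to show that once a region is small enough, the iteration at that node must stop — i.e., either line~5 finds a counterexample (with the modified test of Eq.~\ref{eq:modification}), or {\sc Analyze} verifies the region. Combined with König's lemma (the recursion tree is finitely branching, since each node has at most two children), ruling out infinite branches gives a finite tree, hence termination.

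First I would set up the contradiction: suppose {\sc Verify} does not terminate. Then the recursion tree is infinite, and being binary it has an infinite path $I = I^{(0)}, I^{(1)}, I^{(2)}, \ldots$ where each $I^{(k+1)}$ is one of the two subregions produced by $\ipolicy_\theta$ applied to $I^{(k)}$. By Assumption~\ref{asm:split-reduces-size}, $D(I^{(k)}) \le \lambda^k D(I)$, and since $D(I)$ is finite and $\lambda < 1$, we have $D(I^{(k)}) \to 0$. Next I would use Assumption~\ref{asm:finite-overapprox}: the diameter of the concretized abstract output satisfies $D(\gamma(\network^\#(\alpha(I^{(k)})))) < K_\network \, D(I^{(k)}) \to 0$. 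So the abstract output region shrinks to a point as $k$ grows. Meanwhile, because the network $\network$ is Lipschitz-continuous and $\objective$ is built from $\network$ by affine combination and a max, the objective $\objective$ is itself Lipschitz-continuous, say with constant $L$. On region $I^{(k)}$, {\sc Minimize} returns some $x_*^{(k)} \in I^{(k)}$; for the branch to continue past line~3, we must have $\objective(x_*^{(k)}) > \delta$ (using the modified test). I want to derive a contradiction from this for large $k$.

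The main obstacle — and the step I would spend the most care on — is connecting the shrinking \emph{abstract} output to the failure of the concrete verification at line~6, so that for small enough regions the abstract check \emph{must} succeed. The argument: pick the true minimizer $x_{\min}^{(k)} \in I^{(k)}$ of $\objective$ (which exists by continuity on a compact region — one should note $I$ is closed and bounded). Since the branch did not find a counterexample, $x_*^{(k)}$ has $\objective(x_*^{(k)}) > \delta$; but this does \emph{not} immediately bound $\objective(x_{\min}^{(k)})$, since {\sc Minimize} need not reach the global minimum. So I would instead argue directly from the abstraction. For each coordinate $j \ne K$, the soundness of the abstract interpreter gives $(\network^\#(\alpha(I^{(k)})))$ overapproximates $\{\network(x) : x \in I^{(k)}\}$; its concretization has diameter bounded by $K_\network D(I^{(k)})$. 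Hence over $I^{(k)}$ the spread of each output coordinate $(\network(x))_K - (\network(x))_j$ is at most $2 K_\network D(I^{(k)})$, which is $< \delta$ for $k$ large. Now split into two cases. If at the true minimizer $\objective(x_{\min}^{(k)}) \le \delta$, then for $k$ large the objective on \emph{all} of $I^{(k)}$ is $\le \delta + 2 K_\network D(I^{(k)})$; but then the (modified) counterexample check at line~3 should fire — here one needs the realistic assumption that {\sc Minimize}, being a descent method on $I^{(k)}$, returns a point with $\objective(x_*^{(k)}) \le \delta$ once the entire region has objective value that close to $\delta$ (this is the ``mild assumption on {\sc Minimize}'' the text alludes to, and I would state it explicitly as a hypothesis, e.g. that {\sc Minimize} returns a value no worse than the region's supremum of $\objective$, or simply that it returns \emph{some} point in the region and the region is small enough that all points satisfy the test). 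If instead $\objective(x_{\min}^{(k)}) > \delta > 0$, then for $k$ large the objective is strictly positive on all of $I^{(k)}$ with a uniform gap: every output coordinate difference exceeds $\delta - 2K_\network D(I^{(k)}) > 0$, which forces the abstract output $\gamma(\network^\#(\alpha(I^{(k)})))$ — whose diameter is $< \delta$ — to lie entirely in the safe region $\{y : y_K > y_j \text{ for all } j\ne K\}$, so {\sc Analyze} returns {\rm Verified} and the branch stops at line~7. Either way the infinite path terminates, contradiction. I would close by noting that $\lambda$, $K_\network$, and $L$ being fixed constants makes the threshold $k$ uniform, so the whole tree has bounded depth, giving termination.

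Remaining routine details I would not belabor: verifying that $\objective$ inherits Lipschitz-continuity from $\network$ (affine maps and $\max$ preserve it); that a closed bounded input region stays closed and bounded under axis-aligned splits so minimizers exist; and that ``diameter of the concretization $< \delta$ together with every point strictly safe'' implies the abstract element certifies safety, which is immediate from the definition of verification via abstract interpretation.
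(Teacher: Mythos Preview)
Your overall skeleton matches the paper: bound the recursion depth by showing that once $D(I^{(k)})$ is small enough, the call at that node must return without recursing. But your case split on the \emph{true} minimizer $x_{\min}^{(k)}$ is a detour that creates a genuine gap. In your case~1 you end up needing an extra hypothesis on \textsc{Minimize} (that it returns a point with $\objective \le \delta$ once the whole region has objective values ``close to $\delta$''), and you acknowledge this yourself. That assumption is neither stated in the paper nor derivable from what is given; indeed the whole point of the $\delta$-modification is to get termination \emph{without} any quality guarantee on \textsc{Minimize} beyond ``returns some point of $I$''.

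The paper's argument avoids the case split entirely, and the trick is worth absorbing: forget $x_{\min}^{(k)}$ and anchor at $x_*$ itself. If line~3 does not fire, then $\objective(x_*) > \delta$, i.e.\ $\network(x_*)_K - \network(x_*)_j > \delta$ for every $j \ne K$. The concrete output $\network(x_*)$ lies in the abstract output $F(I^{(k)}) = \gamma(\network^\#(\alpha(I^{(k)})))$, and by Assumptions~\ref{asm:split-reduces-size} and~\ref{asm:finite-overapprox} its diameter is $< K_\network \lambda^k D(I)$. Once $k$ is large enough that this is $< \delta/2$, every $y' \in F(I^{(k)})$ satisfies $|y'_i - \network(x_*)_i| < \delta/2$ in each coordinate, hence
\[
y'_K - y'_j \;>\; \bigl(\network(x_*)_K - \tfrac{\delta}{2}\bigr) - \bigl(\network(x_*)_j + \tfrac{\delta}{2}\bigr) \;>\; \delta - \delta \;=\; 0,
\]
so \textsc{Analyze} must return \textrm{Verified}. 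No case split, no assumption on \textsc{Minimize}, and no need for the Lipschitz constant of $\objective$ that you introduce. In effect your case~2 \emph{is} the paper's argument, just anchored at the wrong point; anchor at $x_*$ instead of $x_{\min}$ and case~1 simply vanishes.
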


In addition to termination, our small modification to Algorithm~\ref{alg:refinement} also ensures a property called $\delta$-completeness~\cite{delta-complete}. In the context of  satisfiability over real numbers, $\delta$-completeness means that, when the algorithm returns a satisfying assignment $\sigma$, the formula is either indeed satisfiable or a $\delta$-perturbation on its numeric terms would make it satisfiable. To adapt this notion of $\delta$-completeness to our context, we introduce the folowing concept $\delta$-counterexamples:

\begin{definition}
For a given network $\network$, input region $I$, target class $K$, and
$\delta>0$, a $\delta$-counterexample is a point $x \in I$ such that for some
$j$ with $1 \le j \le m$ and $j \neq K$, $\network(x)_K - \network(x)_j \le
\delta$.
\label{def:delta-complete}
\end{definition}
Intuitively, a $\delta$-counterexample is a point in the input space 
for which the output almost violates the given specification. We can
view $\delta$ as a parameter which controls how close to violating the
specification a point must be to be considered ``almost'' a counterexample.

\begin{theorem}
Consider the variant of Algorithm~\ref{alg:refinement} where the predicate at line 3 is replaced with
Eq.~\ref{eq:modification}. Then, the verification algorithm is
$\delta$-complete --- i.e., if the property is not verified,
it returns a {$\delta$-counterexample}.
\label{thm:completeness}
\end{theorem}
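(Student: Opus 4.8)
The plan is to prove $\delta$-completeness by structural induction on the recursion tree produced by the top-level call {\sc Verify}$(\network, I, K, \policy_\theta)$, where we run the modified algorithm whose line~3 guard is $\objective(x_*) \le \delta$. By Theorem~\ref{thm:termination} this tree is finite, so the induction is well founded. Before doing the induction, I would record one invariant: every region $I'$ labelling a node of the tree satisfies $I' \subseteq I$. This holds trivially at the root and is preserved downward, because the partition policy invoked on line~8 returns $I_1, I_2$ with $I' = I_1 \uplus I_2$, so $I_1, I_2 \subseteq I'$ at each node, and transitivity closes the claim. The payoff of this invariant is that, by Definition~\ref{def:delta-complete}, a $\delta$-counterexample for any node's region $I'$ is automatically a $\delta$-counterexample for the original $I$: the definition only constrains membership in the region and the value of $\network$ at the point.

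The base case concerns a leaf of the tree that returns something other than {\rm Verified}. Scanning Algorithm~\ref{alg:refinement}, the only return statement yielding a non-{\rm Verified} value without a preceding recursive call is line~4, which executes precisely when the modified guard $\objective(x_*) \le \delta$ holds. Expanding $\objective$ via Eq.~\ref{eq:objective}, this says $(\network(x_*))_K - \max_{j \neq K}(\network(x_*))_j \le \delta$, so some $j \neq K$ witnesses $(\network(x_*))_K - (\network(x_*))_j \le \delta$. Since {\sc Minimize} optimizes over its input region and returns a point of it (this is exactly the behavior of projected gradient descent over $I'$), we have $x_* \in I' \subseteq I$, and hence $x_*$ is a $\delta$-counterexample in the sense of Definition~\ref{def:delta-complete}.

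For the inductive step, take an internal node whose call returns a non-{\rm Verified} value. The remaining return sites are line~11, reached only when the first recursive call returned $r_1 \neq {\rm Verified}$, and the final line, which forwards the result of the second recursive call. In either case the returned value is verbatim the output of a child call on a sub-region $I_1$ or $I_2$; by the induction hypothesis it is a $\delta$-counterexample for that sub-region, hence, by the subset invariant, a $\delta$-counterexample for $I$. As these are the only ways {\sc Verify} can return a non-{\rm Verified} value, the induction is complete, and with it the theorem.

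I expect the genuine content here to be slight: the argument is a case analysis over the four return statements, so the only points needing care are (i) invoking Theorem~\ref{thm:termination} to legitimize an induction over a finite recursion tree, (ii) carrying the subset invariant so counterexamples discovered deep in the recursion remain valid at the top level, and (iii) the mild modelling premise that {\sc Minimize}$(I', \objective)$ returns an element of $I'$, which the projection step of PGD already guarantees. It is also worth noting what the proof does \emph{not} use: it is entirely insensitive to how the policies $\apolicy_\theta$ and $\ipolicy_\theta$ are represented, learned, or parameterized---completeness survives any choice of policy, provided each split is an honest partition of the region and the underlying abstract interpreter is sound.
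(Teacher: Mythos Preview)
Your proposal is correct and follows essentially the same approach as the paper's own proof: invoke Theorem~\ref{thm:termination} to guarantee a finite recursion tree, then do a case analysis over the return statements, with line~4 as the base case and the recursive returns handled by the induction hypothesis. If anything, your write-up is slightly more careful than the paper's, since you make explicit the subset invariant $I' \subseteq I$ (so that a $\delta$-counterexample found in a sub-region remains one for the original $I$) and the assumption that \textsc{Minimize} returns a point of its input region---both of which the paper uses implicitly.
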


\section{Implementation}
\label{sec:impl}

We have implemented the ideas proposed in this paper in a tool called \toolname, written
in C++. Internally, \toolname\ uses the ELINA abstract interpretation library \cite{elina} to
implement the \Call{Analyze}{} procedure from Algorithm~\ref{alg:refinement}, and it uses the BayesOpt library
~\cite{bayesopt} to perform Bayesian optimization.


\paragraph{Parallelization.}
Our proposed verification algorithm is easily parallelizable, as different 
calls to the abstract interpreter can be run on different threads. Our implementation takes
advantage of this observation and utilizes as many threads as the host  machine can provide by running 
different calls to ELINA in parallel.


\paragraph{Training.}
We trained our verification policy on 12 different robustness properties of
a neural network used in the ACAS Xu collision avoidance system~\cite{uas}.
However, since even verifying even a single benchmark can take a very long time,
our implementation uses two tactics to reduce training time. First, we parallelize the 
 training phase of the algorithm using the MPI framework  \cite{MPI} and solve each benchmark
at the same time.  Second, we set a time limit of 700 seconds (per-process cputime) per benchmark. 
Contrary to what we may expect from machine learning systems, a small set of
benchmarks is sufficient to learn a good strategy for our setting. We conjecture
that this is because the relatively small number of features allowed by Bayesian
optimization helps to regularize the learned policy.


\paragraph{Featurization.} Recall that our verification policy uses a \emph{featurization function} to convert its
input to a feature vector. As mentioned in Section~\ref{sec:featurization}, this
featurization function should select a compact set of features so that our
training is efficient and avoids overfitting our policy to the training set.
These features should also capture revalant information about the network and
the property so that our learned policy can generalize across networks. With
this in mind, we used the following features in our implementation:

\begin{itemize}[leftmargin=*]
  \item the distance between the center of the input region $I$ and
    the solution $x_*$ to the optimization problem
  \item the value of the objective function $\objective$ (Eq.~\ref{eq:objective}) at $x_*$ 
  \item the magnitude of the gradient of the network at $x_*$
  \item average length of the input space along each dimension
\end{itemize}

\paragraph{Selection.} Recall from Section~\ref{sec:learning} that our verification policy $\policy$ uses two different selection functions $\varphi^\alpha$ and $\varphi^I$ for choosing an abstract domain and splitting plane respectively.

The selection function $\varphi^I$  takes a vector of three inputs. The
first two are real-valued numbers that decide which dimension to split on. Rather
than considering all possible dimensions, our implementation chooses between two
dimensions to make training more manageable. The first one is the longest dimension (i.e.,  input dimension with
the largest length), and the second one is the dimension that has the largest
\emph{influence} \cite{reluval} on $\network(x)_K$. 
The last input to the selection function is
the offset at which to split the region. This value is clipped to
$[0,1]$ and
then interpreted as a ratio of the distance from the center of the input region $I$ to the
solution $x_*$ of Eq.~\ref{eq:optimization}. For example, if the value is 0, the region will be
bisected, and if the value is 1, then the splitting plane will intersect $x_*$.
Finally, if the splitting plane is at the boundary of $I$, it is offset slightly
so that the strategy satisfies Assumption~\ref{asm:split-reduces-size}.

The selection function  $\varphi^\alpha$ for choosing an abstract domain
 takes a vector of two inputs. The first controls the base abstract domain
(intervals or zonotopes) and the second controls the number of disjuncts to use.
In both cases, the output is extracted by first clipping the input to a fixed
range and then discretizing the resulting value.

\section{Evaluation}
\label{sec:eval}

To evaluate the ideas proposed in this paper, we conduct an experimental evaluation that is designed to answer the following three research questions:
\begin{enumerate}[label=(RQ\arabic*)]
  \item How does \toolname\ compare against  state-of-the-art tools for proving neural
    network robustness?
  \item How does counterexample search impact the performance of \toolname?
  \item What is the impact of learning a verification policy on the performance of \toolname?
\end{enumerate}

\paragraph{Benchmarks.} To answer these research questions, we collected a benchmark suite of 602 verification problems
across 7 deep neural networks, including one convolutional network and
several fully connected networks.  The fully connected networks have sizes 3$\times$100, 6$\times$100,
9$\times$100, and
9$\times$200, where $N\times M$ means there are $N$ fully connected
layers and each interior layer has size $M$. The convolutional network has
a LeNet architecture \cite{LeNet} consisting of two convolutional layers, followed by a
max pooling layer,  two more convolutional layers, another max pooling
layer, and finally three fully connected layers.
All of these networks were trained on the
MNIST \cite{LeNet} and CIFAR \cite{CIFAR} datasets.

\subsection{ Comparison with \ai \ (RQ1)}\label{sec:ai2-compare}
\begin{figure*}
  \includegraphics[width=\textwidth]{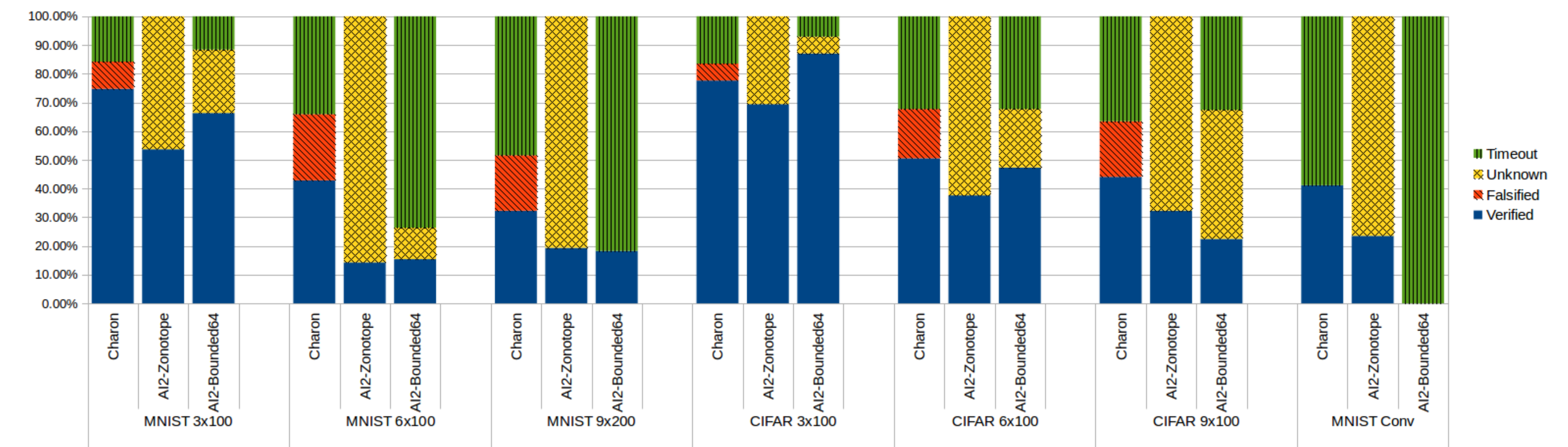}
  \vspace{-0.2in}
  \caption{Summary of results for \ai\ and \toolname.}
  \label{fig:ai2-summary}
  \vspace{-0.2in}
\end{figure*}

For each network, we attempt to
verify around 100 robustness properties.
Following prior work \cite{ai2}, the evaluated robustness properties are
so-called \emph{brightening attacks} \cite{deepxplore}. For an input point $x$ and a threshold
$\tau$, a brightening attack consists of the input region
\[I = \left\{x' \in \real^n \mid \forall i. (x_i \ge \tau \wedge x_i \le x'_i
\le 1) \vee x'_i = x_i\right\}.\]
That is, for each pixel in the input image, if the value of that pixel is
greater than $\tau$, then the corresponding pixel in the perturbed image may be
anywhere between the initial value and one, and all other pixels remain unchanged.

\paragraph{Set-up.} All experiments described in this section were performed on the Google Compute Engine (GCE) \cite{GCP} using an 8 vcpu instance with 10.5 GB of memory. All time measurements report the total CPU time (rather than wall clock time) in order to avoid biasing the results because of \toolname's parallel nature.
For the purposes of this experiment, we set a time limit of 1000 seconds per benchmark.

\begin{figure}
  \includegraphics[width=\columnwidth]{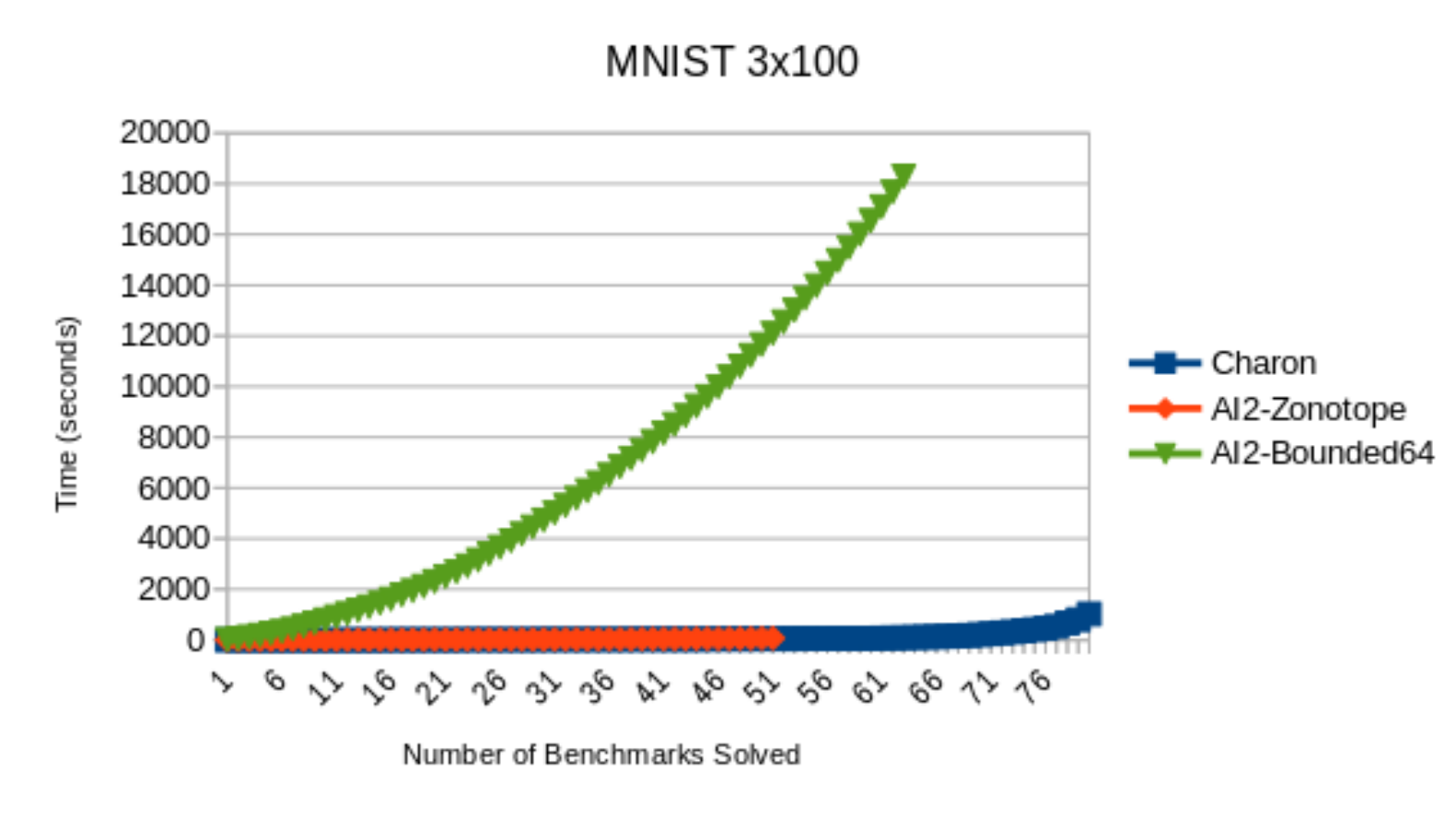}
  \vspace{-0.4in}
  \caption{Comparison on a 3x100 MNIST network.}
  \label{fig:mnist-3-100}
    \vspace{-0.1in}
\end{figure}

\begin{figure}
  \includegraphics[width=\columnwidth]{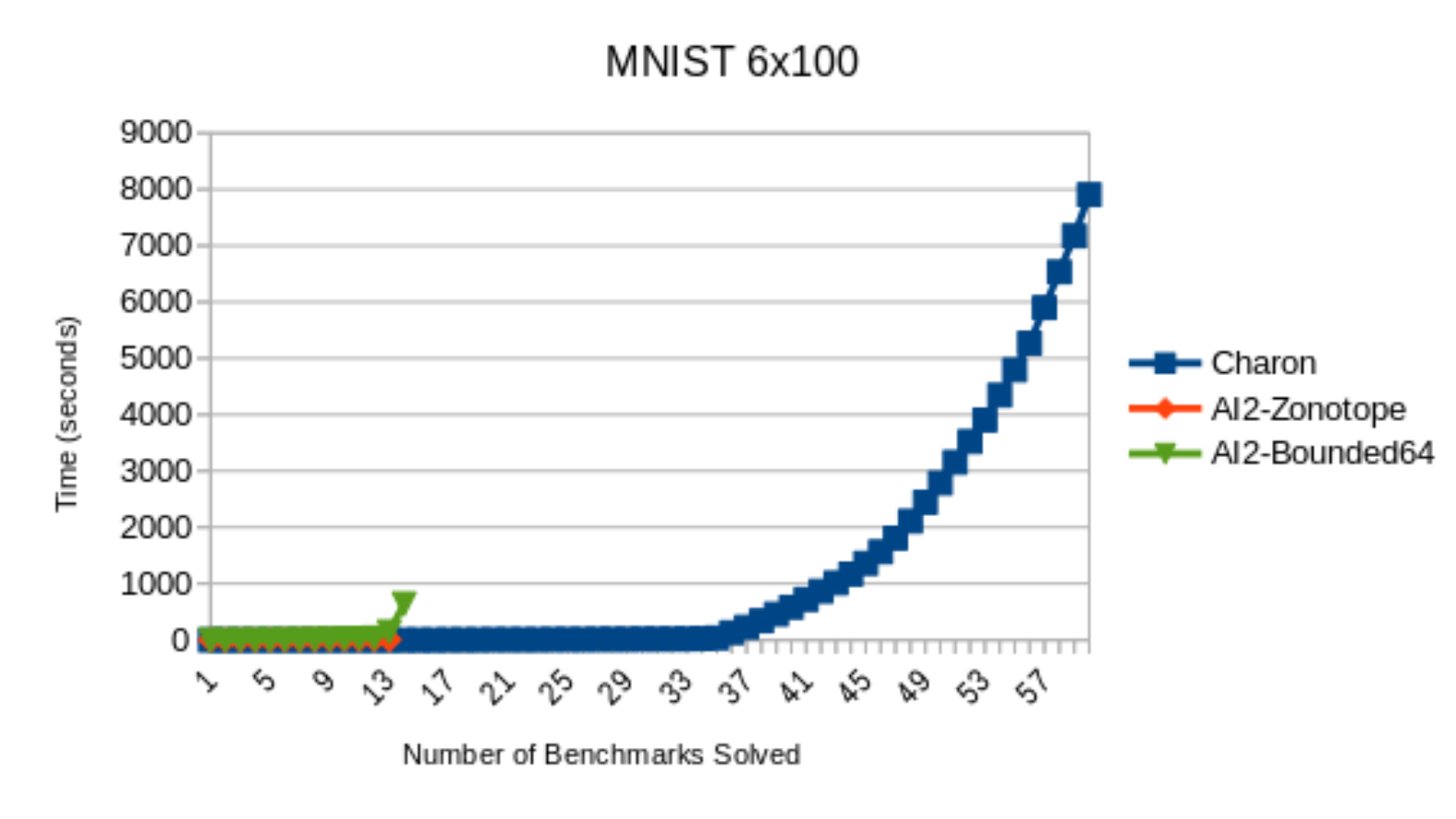}
    \vspace{-0.4in}
  \caption{Comparison on a 6x100 MNIST network.}
  \label{fig:mnist-6-100}
    \vspace{-0.1in}
\end{figure}

\begin{figure}
  \includegraphics[width=\columnwidth]{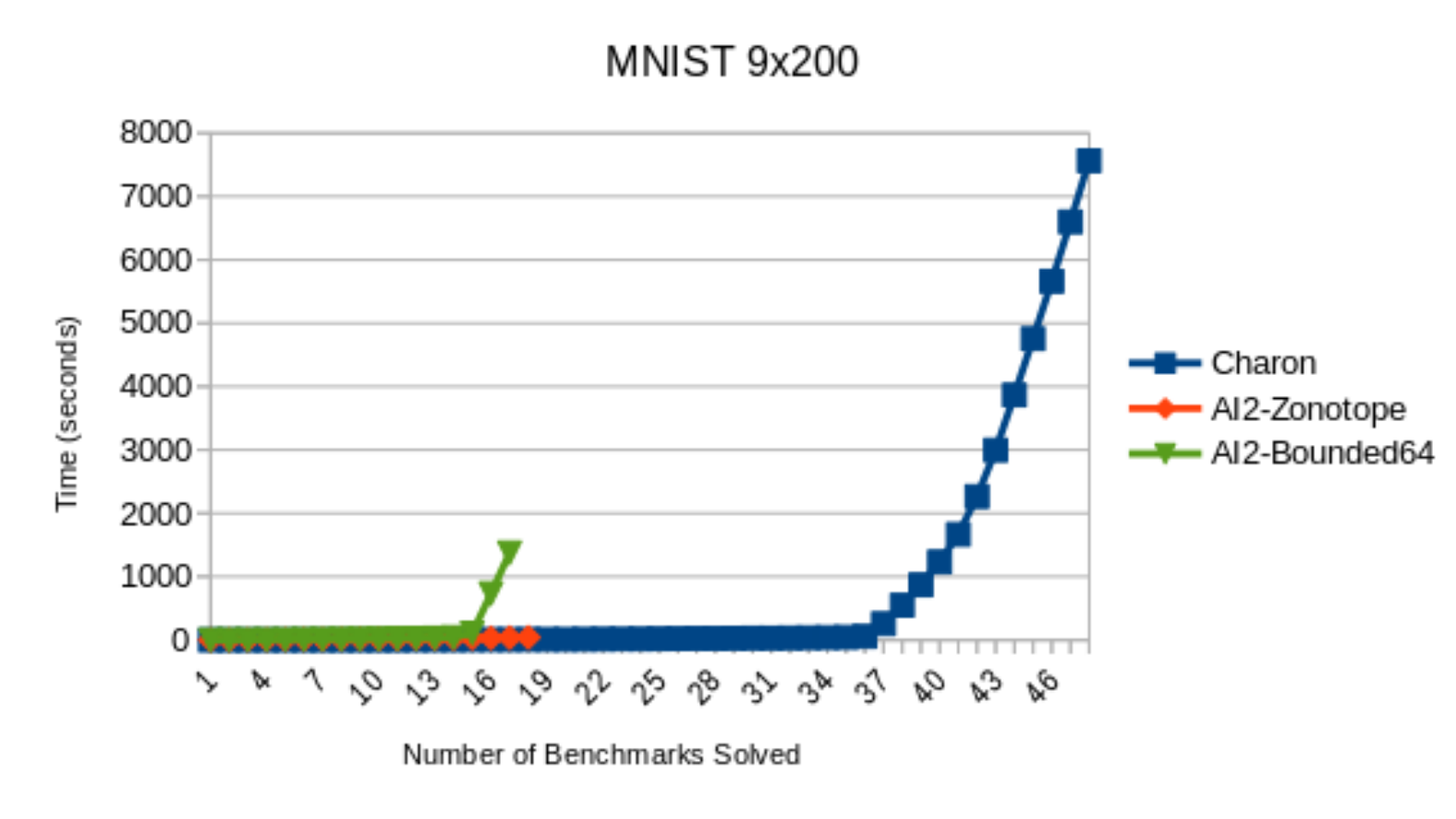}
    \vspace{-0.4in}
  \caption{Comparison on a 9x200 MNIST network.}
  \label{fig:mnist-9-200}
      \vspace{-0.1in}
\end{figure}

\begin{figure}
  \includegraphics[width=\columnwidth]{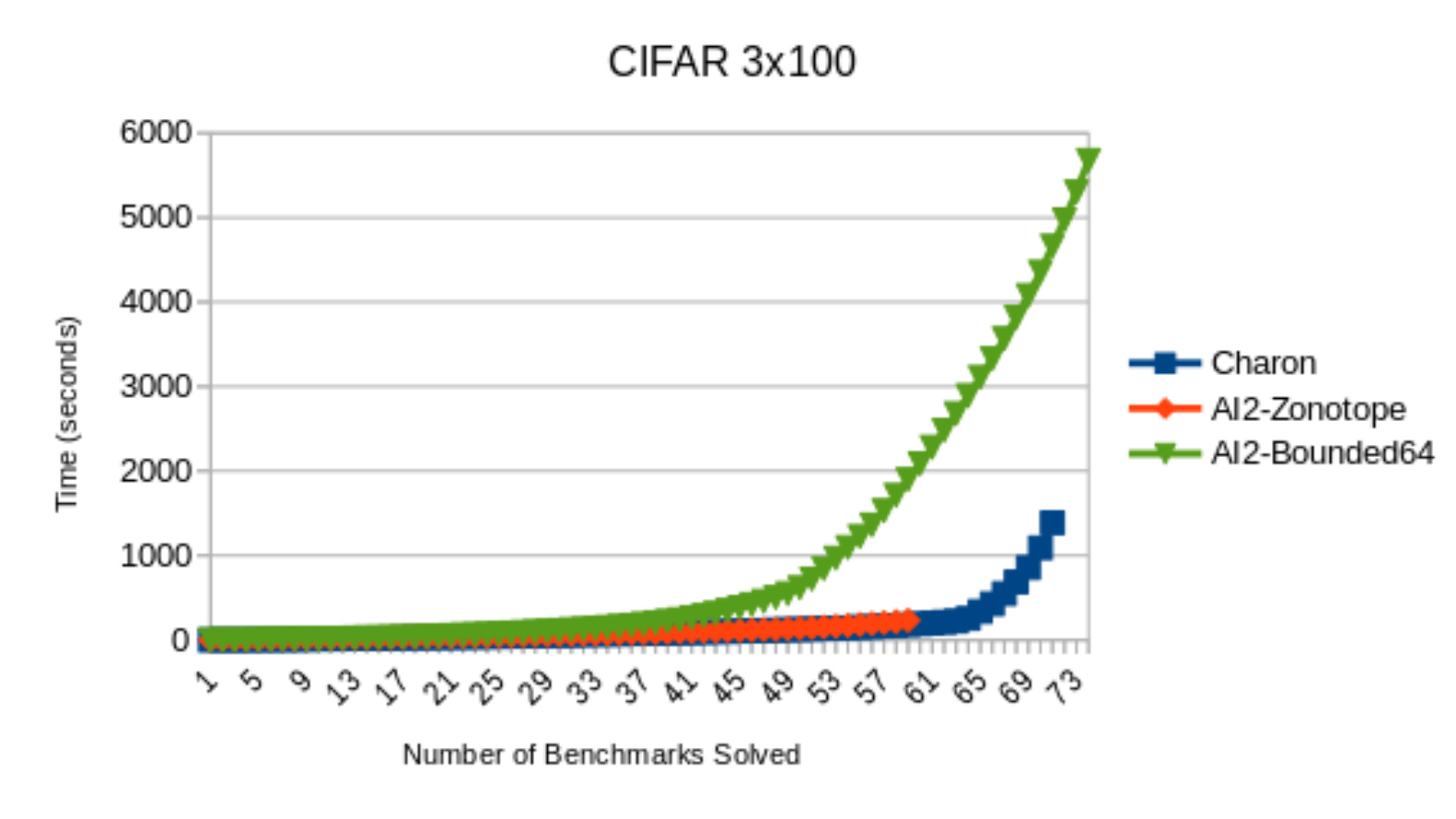}
    \vspace{-0.4in}
  \caption{Comparison on a 3x100 CIFAR network.}
  \label{fig:cifar-3-100}
      \vspace{-0.1in}
\end{figure}

In this section we compare \toolname\ with \ai~\footnote{Because we did not have
access to the original \ai, we reimplemented it. However, to allow for a fair
comparison, we use the same underlying abstract interpretation library, and we
implement the transformers exactly as described in \cite{ai2}.},  a state-of-the-art   tool for verifying network robustness~\cite{ai2}.  As discussed in Section~\ref{sec:background}, \ai\ is incomplete and requires the user to specify which abstract domain to use.  Following their evaluation strategy from the IEEE S\&P paper~\cite{ai2},  we instantiate \ai\ with two different domains,
namely zonotopes and bounded powersets of zonotopes of size 64.  We refer to these two variants as \ai-Zonotope and \ai-Bounded64.

The results of this comparison are summarized in Figure~\ref{fig:ai2-summary}.
This graph shows the percentage of benchmarks each tool was able to verify or
falsify, as well as the percentage of benchmarks where the tool timed out and
the percentage where the tool was unable to conclude either true or false. Note
that, because \toolname\ is $\delta$-complete, there are no ``unknown'' results
for it, and because \ai\ cannot find counterexamples, \ai\ has no ``falsified''
results.

 The details for each network are  shown in Figures~\ref{fig:mnist-3-100}
-~\ref{fig:mnist_conv}. Each chart shows the cumulative time taken on the
y-axis and the number of benchmarks solved on the x-axis (so lower is better). 
The results for each tool
include only those benchmarks  that the tool could solve correctly within the 
time limit of 1000 seconds. Thus, a line extending further to the right indicates that the tool could
solve more benchmarks. Since \ai-Bounded64 times out on every benchmark for
the convolutional network, it does not appear in Figure~\ref{fig:mnist_conv}.

\begin{figure}
  \includegraphics[width=\columnwidth]{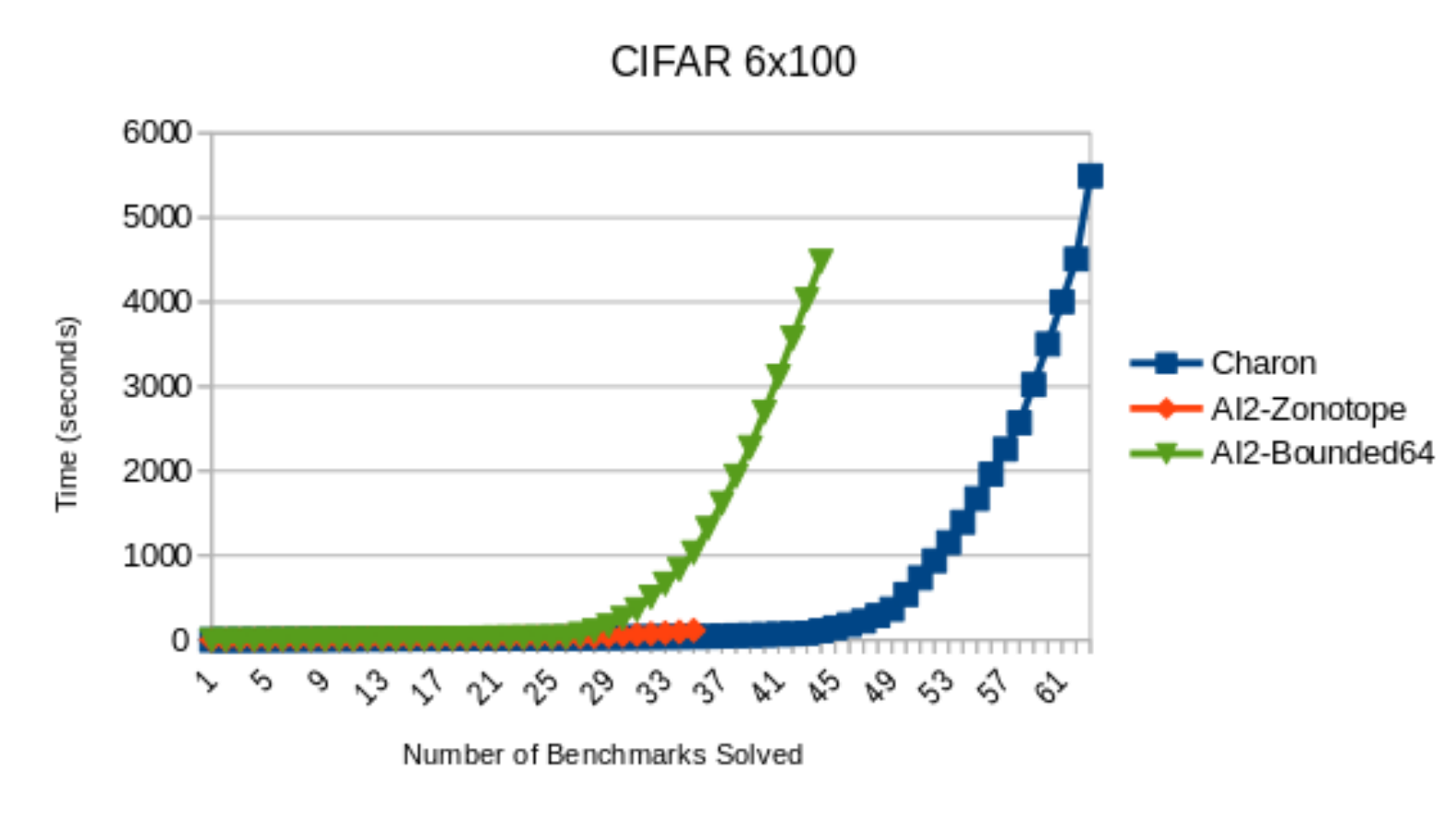}
    \vspace{-0.4in}
  \caption{Comparison on a 6x100 CIFAR network.}
  \label{fig:cifar-6-100}
      \vspace{-0.1in}
\end{figure}

\begin{figure}
  \includegraphics[width=\columnwidth]{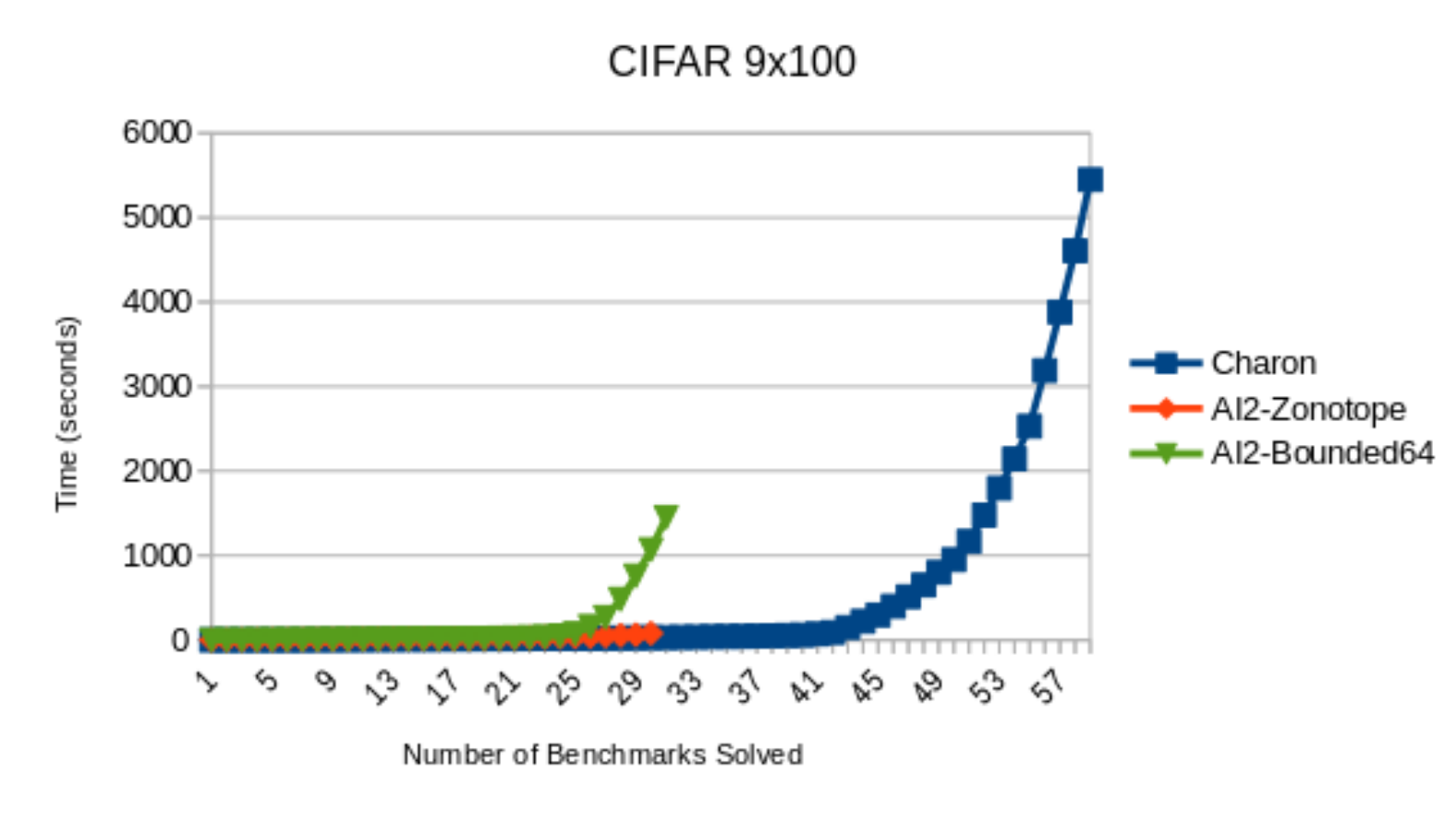}
    \vspace{-0.4in}
  \caption{Comparison on a 9x100 CIFAR network.}
  \label{fig:cifar-9-100}
      \vspace{-0.1in}
\end{figure}

\begin{figure}
  \includegraphics[width=\columnwidth]{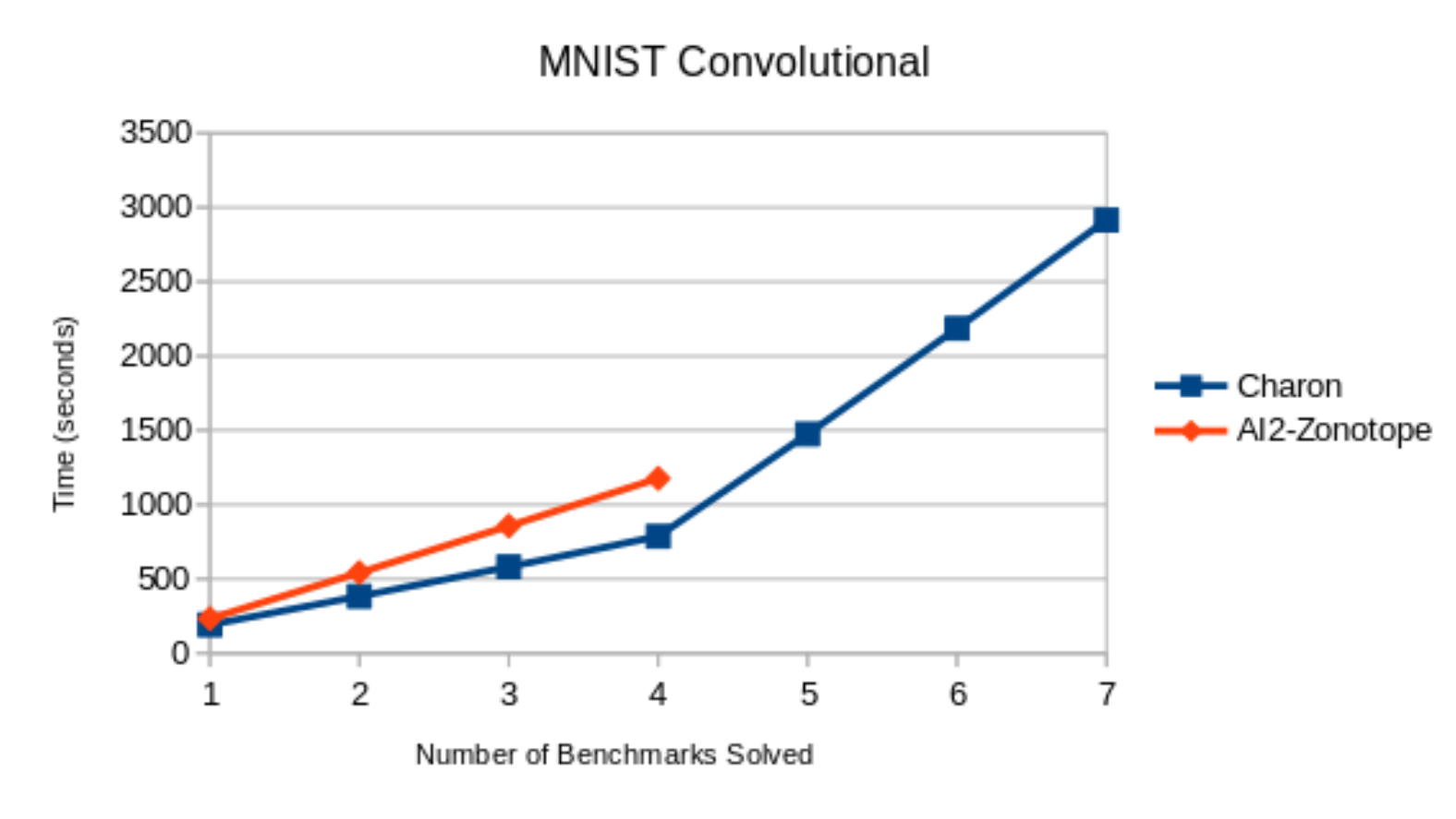}
    \vspace{-0.4in}
  \caption{Comparison on a convolutional network.}
  \label{fig:mnist_conv}
      \vspace{-0.2in}
\end{figure}

The key take-away lesson from this experiment is that \toolname\ is able to both solve
more benchmarks compared to \ai-Bounded64 on most networks, and it is
able to solve them much faster. In particular,  \toolname solves $59.7\%$ (resp.
$84.7\%$) more benchmarks compared to \ai-Bounded64 (resp. \ai-Zonotope).
Furthermore, among the benchmarks that can be solved by both tools, \toolname is
6.15$\times$ (resp. 1.12$\times$ ) faster compared to \ai-Bounded64 (resp. \ai-Zonotope). Thus, we believe these results demonstrate 
the advantages of our approach compared to \ai.


\subsection{ Comparison with Complete Tools (RQ1)}\label{sec:reluval-comp}

In this section we compare \toolname\ with other complete tools for robustness
analysis, namely \reluval~\cite{reluval} and \reluplex~\cite{reluplex}. Among these tools, \reluplex 
implements a variant of Simplex with built-in
support for the ReLU activation function~\cite{reluplex}, and \reluval\ is an
abstraction refinement approach without learning or counterexample search.

To perform this experiment, we evaluate all three tools on the
same  benchmarks from Section~\ref{sec:ai2-compare}. However, since \reluval\ and \reluplex\ do not support convolutional layers,
we exclude the convolutional net  from this evaluation.

\begin{figure}
  \includegraphics[width=\columnwidth]{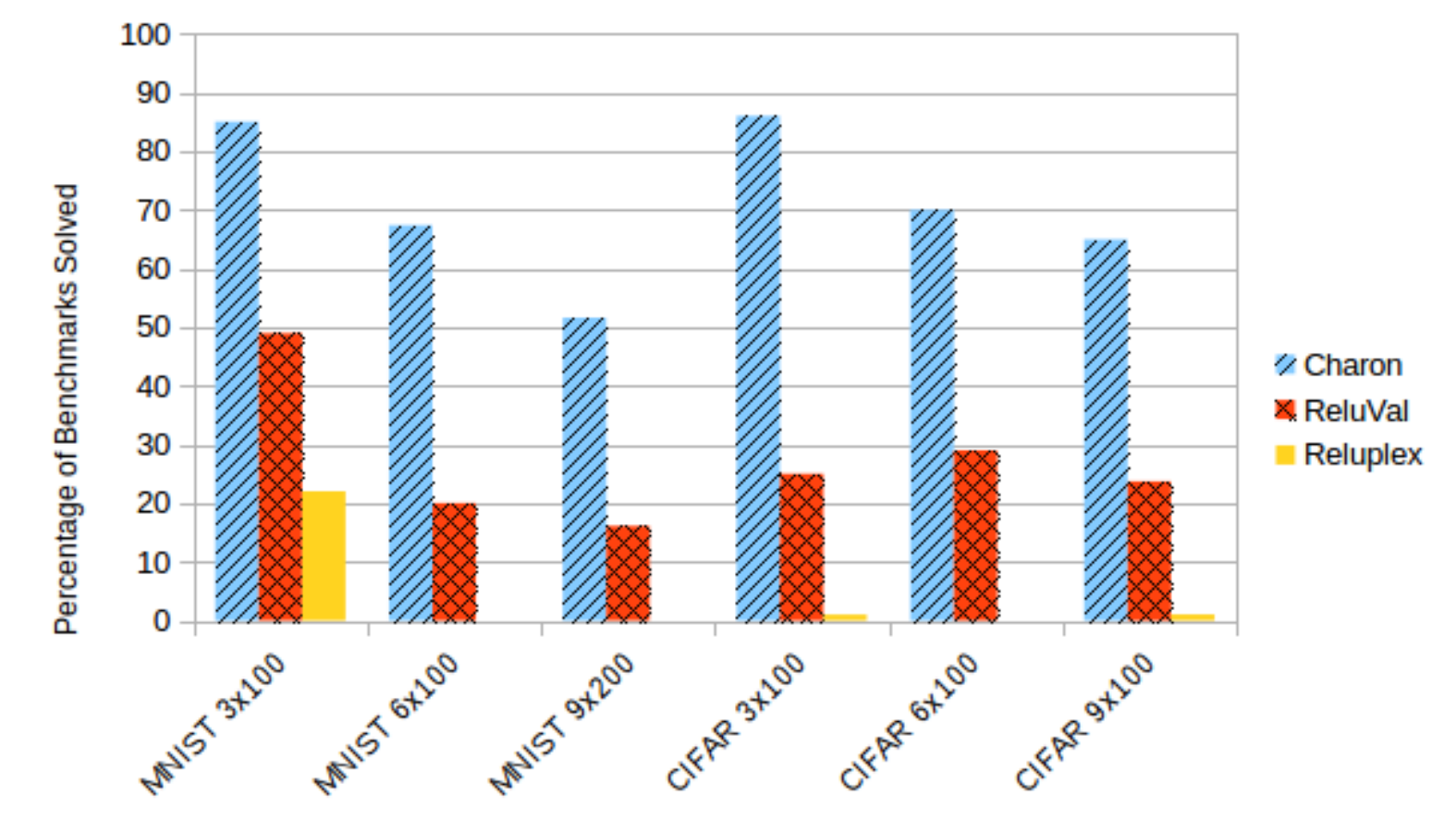}
  \vspace{-0.1in}
  \caption{Comparison with \reluval.}
  \label{fig:reluval}
  \vspace{-0.1in}
\end{figure}

The results of this comparison are summarized in Figure~\ref{fig:reluval}.  Across all benchmarks,
\toolname\ is able to solve 2.6$\times$ (resp. 16.6$\times$)  more problems
compared to \reluval (resp. \reluplex). Furthermore, it is worth noting that the
set of benchmarks that can be solved by \toolname\ is a strict superset of the
benchmarks solved by \reluval.

\subsection{Impact of Counterexample Search (RQ2) }

To understand the benefit of using optimization to search for counterexamples, we now compare the number 
of properties that can be \emph{falsified} using \toolname vs. \reluplex and \reluval. (Recall that \ai\ is incomplete and 
cannot be used for falsification.) Among the 585 benchmarks used in the evaluation from Section~\ref{sec:reluval-comp}, \toolname\ can 
falsify robustness of 123 benchmarks. In contrast, \reluplex can only falsify robustness of one benchmark, and \reluval cannot falsify any of them. Thus, we believe these results demonstrate the usefulness of incorporating optimization-based counterexample search into the decision procedure.

\subsection{ Impact of Learning a Verification Policy (RQ3)}

\begin{figure}
  \includegraphics[width=\columnwidth]{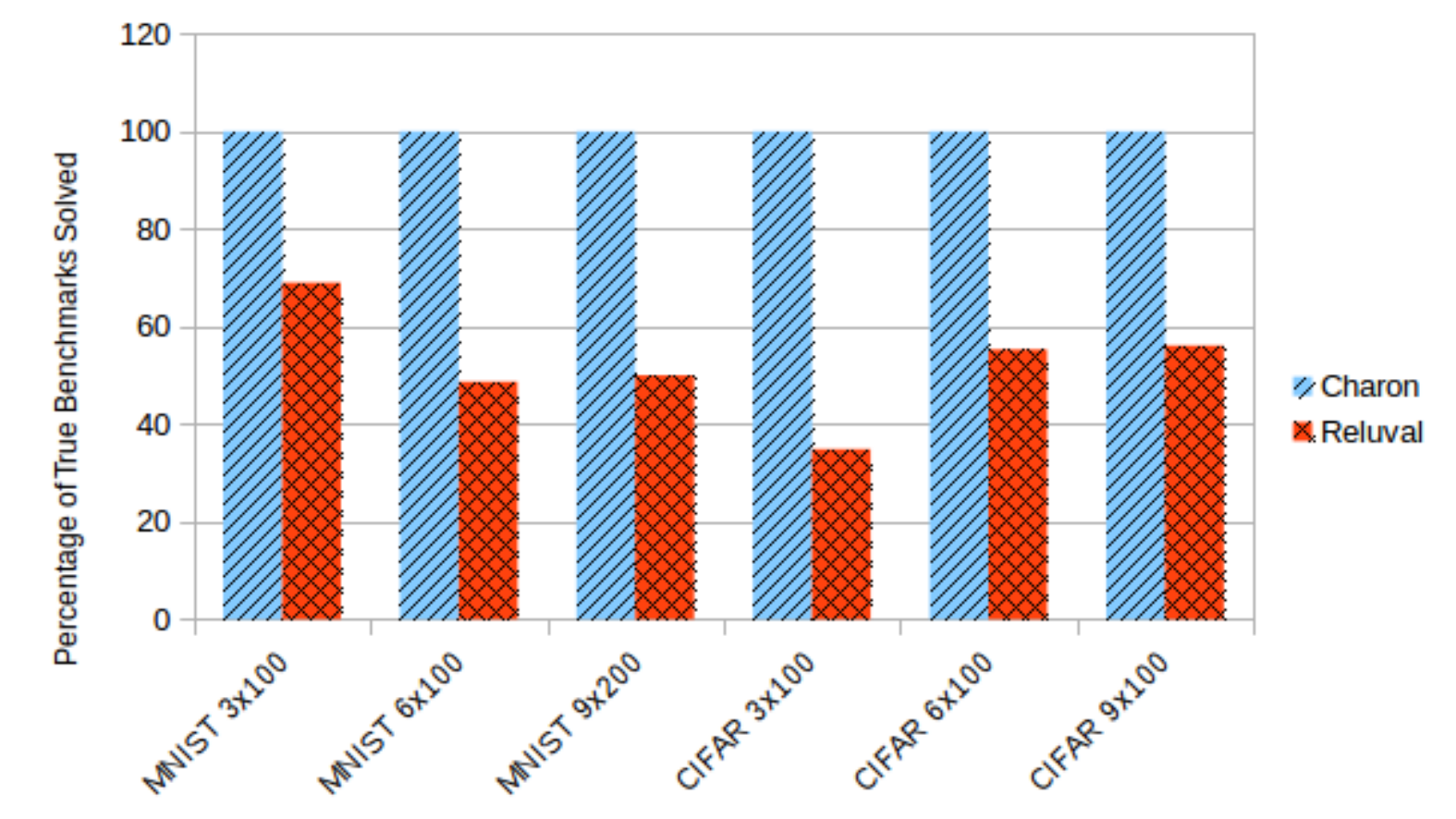}
  \vspace{-0.1in}
  \caption{Comparison with \reluval\ on verified benchmarks.}
  \vspace{-0.2in}
  \label{fig:reluval-verified}
\end{figure}

Recall that a key feature of our algorithm is the use of a machine-learnt verification policy $\policy$
to choose a refinement strategy. To explore the impact of this design choice, we compare our technique against \reluval on 
the subset of the 585 benchmarks for which the robustness property holds. In
particular, as mentioned earlier, \reluval is also based on a form of
abstraction refinement but uses a static, hand-crafted strategy rather than  one
that is learned from data. Thus, comparing against \reluval on the
verifiably-robust benchmarks allows us to evaluate the benefits of learning a
verification policy from data.~\footnote{We compare with \reluval directly
rather than reimplementing the \reluval strategy inside \toolname because our
abstract interpretation engine does not support the domain used by \reluval.
Given this, we believe the comparison to \reluval is the most fair available
option.}

The results of this comparison are shown in Figure~\ref{fig:reluval-verified}. As we can see from this figure, \reluval\ is still only
able to solve between 35-70\% of the benchmarks that can be successfully solved by \toolname. Thus, these results demonstrate that our data-driven approach to learning verification policies is useful for verifying network robustness.

\section{Related Work}
\label{sec:related}

In this section, we survey existing work  on robustness analysis of neural networks 
and other ideas related to this paper.

\vspace{-0.07in}
\paragraph{Adversarial Examples and Robustness.}

Szegedy et al. \cite{SzegedyZSBEGF13} first showed that neural
networks are vulnerable to small perturbations on inputs. 
It has since been shown
that such examples can be exploited to attack machine learning systems in
safety-critical applications such as autonomous robotics~\cite{melis2017deep} and malware
classication~\cite{grosse2017adversarial}.  

Bastani et al.~\cite{BastaniILVNC16} formalized the notion of
 local robustness in neural networks and defined metrics to evaluate the
 robustness of a neural network. Subsequent work has introduced other
 notions of robustness~\cite{reluplex,gopinath2017deepsafe}.

Many recent papers have studied the construction of
adversarial counterexamples
\cite{KurakinAdversarialScale,LyuHL15,NguyenYC15,TabacofV16,SabourCFF15,GrossePM0M16,madry2017towards,
  GoodfellowSS14}.  These approaches are based on various forms of
gradient-based optimization, for example L-BFGS~\cite{szegedy2013intriguing}, FGSM~\cite{GoodfellowSS14} and PGD~\cite{madry2017towards}. While our
implementation uses the PGD method, we could in principle also use (and benefit from advances in) alternative gradient-based optimization methods.

\vspace{-0.07in}
\paragraph{Verification of Neural Networks.} 

Scheibler et
al. \cite{ScheiblerWWB15} used bounded model checking to verify safety
of neural networks. 
Katz et al.\cite{KatzBDJK17} developed the
Reluplex decision procedure extending the Simplex algorithm to verify
robustness and safety properties of feedforward networks with ReLU units.
Huang et al.\cite{HuangKWW16}~showed a
verification framework, based on an SMT solver, which verified
robustness with respect to a certain set of functions that can
manipulate the input.  
A few recent papers~\cite{TjengMILP, DuttaMILP, LomuscioMILP} use Mixed
Integer Linear Programming (MILP) solvers to verify local robustness
properties of neural networks. 
These methods do not use abstraction and do not scale very well, but combining
these techniques with abstraction is an interesting
area of future work.

The earliest effort on neural network verification to use abstraction
was by Pulina and Tacchella~\cite{PulinaT10} --- in fact,  like our method,
they considered an abstraction-refinement approach to solve this problem. However, their approach
represents abstractions using 
general linear arithmetic formulas and uses a decision procedure to
perform verification and counterexample search. Their approach 
was shown to be successful for a network with only $6$
neurons, so it does not have good scalability properties.  More recently, Gehr
et al. \cite{ai2} presented the AI$^2$ system for abstract
interpretation of neural networks. Unlike our work, AI$^2$ is
incomplete and cannot produce concrete counterexamples. The most
closely related approach from prior work is \reluval~\cite{reluval},
which performs abstract
interpretation using symbolic intervals. The two key differences between
\reluval and our work are that \toolname couples abstract interpretation
with optimization-based counterexample search and learns
verification policies from data. As demonstrated in Section~\ref{sec:eval}, both of 
these ideas have a significant impact on our empirical results.




\vspace{-0.07in}
\paragraph{Learning to Verify.}

The use of data-driven learning in neural network
verification is, so far as we know, 
new.
However, there are many papers~\cite{sharma2013verification,liang2011learning,garg2014ice,heo2016learning,si2018learning}
on the use of such learning in traditional software verification. While most of these efforts learn proofs from execution data for specific
programs, there are a few efforts that seek to learn optimal
instantiations of parameterized abstract domains from a corpus of training {problems}~\cite{liang2011learning,oh2015learning}. The most
relevant work in this space is by Oh et al.~\cite{oh2015learning},
who use Bayesian optimization to adapt a parameterized abstract
domain. The abstract domain in that work is finite, and the Bayesian
optimizer is only used to adjust the context-sensitivity and
flow-sensitivity of the analysis. In contrast, our analysis of neural
networks handles real-valued data and a possibly infinite space of strategies.

\section{Conclusion and Future Work}\label{sec:conclusion}
We have presented a novel technique for verifying robustness properties of
neural networks based on synergistically combining proof search with
counterexample search. This technique makes use of black-box optimization
techniques in order to learn good refinement strategies in a data-driven way. We
implemented our technique and showed that it significantly outperforms
state-of-the-art techniques for robustness verification. Specifically, our
technique is able to solve 2.6$\times$ as many benchmarks as \reluval and
16.6$\times$ as many as \reluplex. Our technique is able to solve more
benchmarks in general than \ai, and is able to solve them far faster.
Moreover, we provide theoretical
guarantees about the termination and ($\delta$-)completeness of our approach.

In order to improve our tool in the future, we plan to explore a broader
set of abstract domains and different black-box optimization techniques.
Notably, one can view solver-based techniques as a perfectly precise abstract
domain. While solver-based techniques have so far proven to be slow on many
benchmarks, our method could learn when it is best to apply solvers and when to
choose a less precise domain. This would allow the tool to combine solvers and
traditional numerical domains in the most efficient way. Additionally, while
Bayesian optimization fits our current framework well, it may be possible
to modify the framework to work with different learning techniques which can
explore higher-dimensional strategy spaces. In particular, reinforcement
learning may be an interesting approach to explore in future work.

\begin{acks}                            
  We thank our shepherd Michael Pradel as well as our anonymous reviewers and
  members of the UToPiA group for
  their helpful feedback.
  This material is based upon work supported by the
  \grantsponsor{SP784}{National Science
    Foundation}{https://www.nsf.gov/} under Grants
  No.~\grantnum{SP784}{CCF-1162076},
  No.~\grantnum{SP784}{CCF-1704883},
  No.~\grantnum{SP784}{CNS-1646522}, and
  No.~\grantnum{SP784}{CCF-1453386}. This work is also supported by Google
  Cloud.
\end{acks}

\ifextended
\else
\balance
\fi
\bibliography{main}


\begin{thebibliography}{58}


\ifx \showCODEN    \undefined \def \showCODEN     #1{\unskip}     \fi
\ifx \showDOI      \undefined \def \showDOI       #1{#1}\fi
\ifx \showISBNx    \undefined \def \showISBNx     #1{\unskip}     \fi
\ifx \showISBNxiii \undefined \def \showISBNxiii  #1{\unskip}     \fi
\ifx \showISSN     \undefined \def \showISSN      #1{\unskip}     \fi
\ifx \showLCCN     \undefined \def \showLCCN      #1{\unskip}     \fi
\ifx \shownote     \undefined \def \shownote      #1{#1}          \fi
\ifx \showarticletitle \undefined \def \showarticletitle #1{#1}   \fi
\ifx \showURL      \undefined \def \showURL       {\relax}        \fi
\providecommand\bibfield[2]{#2}
\providecommand\bibinfo[2]{#2}
\providecommand\natexlab[1]{#1}
\providecommand\showeprint[2][]{arXiv:#2}

\bibitem[\protect\citeauthoryear{??}{eli}{[n. d.]}]%
        {elina}
 \bibinfo{year}{[n. d.]}\natexlab{}.
\newblock \bibinfo{title}{ELINA: ETH Library for Numerical Analysis}.
\newblock
\newblock
\urldef\tempurl%
\url{http://elina.ethz.ch}
\showURL{%
\tempurl}


\bibitem[\protect\citeauthoryear{??}{GCP}{[n. d.]}]%
        {GCP}
 \bibinfo{year}{[n. d.]}\natexlab{}.
\newblock \bibinfo{title}{Google Cloud Platform (GCP)}.
\newblock \bibinfo{howpublished}{\url{https://cloud.google.com/}}.
\newblock
\newblock
\shownote{Accessed: 2018-11-14.}


\bibitem[\protect\citeauthoryear{ApolloAuto}{ApolloAuto}{2017}]%
        {car-1}
\bibfield{author}{\bibinfo{person}{ApolloAuto}.}
  \bibinfo{year}{2017}\natexlab{}.
\newblock \bibinfo{title}{apollo}.
\newblock
\newblock
\urldef\tempurl%
\url{https://github.com/ApolloAuto/apollo}
\showURL{%
\tempurl}


\bibitem[\protect\citeauthoryear{Bastani, Ioannou, Lampropoulos, Vytiniotis,
  Nori, and Criminisi}{Bastani et~al\mbox{.}}{2016}]%
        {BastaniILVNC16}
\bibfield{author}{\bibinfo{person}{Osbert Bastani}, \bibinfo{person}{Yani
  Ioannou}, \bibinfo{person}{Leonidas Lampropoulos}, \bibinfo{person}{Dimitrios
  Vytiniotis}, \bibinfo{person}{Aditya~V. Nori}, {and} \bibinfo{person}{Antonio
  Criminisi}.} \bibinfo{year}{2016}\natexlab{}.
\newblock \showarticletitle{Measuring Neural Net Robustness with Constraints}.
  In \bibinfo{booktitle}{\emph{Advances in Neural Information Processing
  Systems 29: Annual Conference on Neural Information Processing Systems 2016,
  December 5-10, 2016, Barcelona, Spain}}. \bibinfo{pages}{2613--2621}.
\newblock


\bibitem[\protect\citeauthoryear{Bojarski, Testa, Dworakowski, Firner, Flepp,
  Goyal, Jackel, Monfort, Muller, Zhang, Zhang, Zhao, and Zieba}{Bojarski
  et~al\mbox{.}}{2016}]%
        {car-2}
\bibfield{author}{\bibinfo{person}{Mariusz Bojarski},
  \bibinfo{person}{Davide~Del Testa}, \bibinfo{person}{Daniel Dworakowski},
  \bibinfo{person}{Bernhard Firner}, \bibinfo{person}{Beat Flepp},
  \bibinfo{person}{Prasoon Goyal}, \bibinfo{person}{Lawrence~D. Jackel},
  \bibinfo{person}{Mathew Monfort}, \bibinfo{person}{Urs Muller},
  \bibinfo{person}{Jiakai Zhang}, \bibinfo{person}{Xin Zhang},
  \bibinfo{person}{Jake Zhao}, {and} \bibinfo{person}{Karol Zieba}.}
  \bibinfo{year}{2016}\natexlab{}.
\newblock \showarticletitle{End to End Learning for Self-Driving Cars}.
\newblock \bibinfo{journal}{\emph{CoRR}}  \bibinfo{volume}{abs/1604.07316}
  (\bibinfo{year}{2016}).
\newblock
\urldef\tempurl%
\url{http://arxiv.org/abs/1604.07316}
\showURL{%
\tempurl}


\bibitem[\protect\citeauthoryear{Brochu, M.~Cora, and De~Freitas}{Brochu
  et~al\mbox{.}}{2010}]%
        {expected-improvement}
\bibfield{author}{\bibinfo{person}{Eric Brochu}, \bibinfo{person}{Vlad
  M.~Cora}, {and} \bibinfo{person}{Nando De~Freitas}.}
  \bibinfo{year}{2010}\natexlab{}.
\newblock \showarticletitle{A Tutorial on Bayesian Optimization of Expensive
  Cost Functions, with Application to Active User Modeling and Hierarchical
  Reinforcement Learning}.
\newblock   \bibinfo{volume}{abs/1012.2599} (\bibinfo{date}{12}
  \bibinfo{year}{2010}).
\newblock


\bibitem[\protect\citeauthoryear{Cousot and Cousot}{Cousot and Cousot}{1977}]%
        {cousot77}
\bibfield{author}{\bibinfo{person}{Patrick Cousot} {and}
  \bibinfo{person}{Radhia Cousot}.} \bibinfo{year}{1977}\natexlab{}.
\newblock \showarticletitle{Abstract Interpretation: A Unified Lattice Model
  for Static Analysis of Programs by Construction or Approximation of
  Fixpoints}. In \bibinfo{booktitle}{\emph{Proceedings of the 4th ACM
  SIGACT-SIGPLAN Symposium on Principles of Programming Languages}}
  \emph{(\bibinfo{series}{POPL '77})}. \bibinfo{publisher}{ACM},
  \bibinfo{address}{New York, NY, USA}, \bibinfo{pages}{238--252}.
\newblock
\urldef\tempurl%
\url{https://doi.org/10.1145/512950.512973}
\showDOI{\tempurl}


\bibitem[\protect\citeauthoryear{Dutta, Jha, Sankaranarayanan, and
  Tiwari}{Dutta et~al\mbox{.}}{2017}]%
        {DuttaMILP}
\bibfield{author}{\bibinfo{person}{Souradeep Dutta}, \bibinfo{person}{Susmit
  Jha}, \bibinfo{person}{Sriram Sankaranarayanan}, {and}
  \bibinfo{person}{Ashish Tiwari}.} \bibinfo{year}{2017}\natexlab{}.
\newblock \showarticletitle{Output Range Analysis for Deep Neural Networks}.
\newblock \bibinfo{journal}{\emph{CoRR}}  \bibinfo{volume}{abs/1709.09130}
  (\bibinfo{year}{2017}).
\newblock
\showeprint[arxiv]{1709.09130}
\urldef\tempurl%
\url{http://arxiv.org/abs/1709.09130}
\showURL{%
\tempurl}


\bibitem[\protect\citeauthoryear{Esteva, Kuprel, Novoa, Ko, Swetter, Blau, and
  Thrun}{Esteva et~al\mbox{.}}{2017}]%
        {diagnosis}
\bibfield{author}{\bibinfo{person}{Andre Esteva}, \bibinfo{person}{Brett
  Kuprel}, \bibinfo{person}{Roberto~A. Novoa}, \bibinfo{person}{Justin Ko},
  \bibinfo{person}{Susan~M. Swetter}, \bibinfo{person}{Helen~M. Blau}, {and}
  \bibinfo{person}{Sebastian Thrun}.} \bibinfo{year}{2017}\natexlab{}.
\newblock \showarticletitle{Dermatologist-level classification of skin cancer
  with deep neural networks}.
\newblock \bibinfo{journal}{\emph{Nature}}  \bibinfo{volume}{542}
  (\bibinfo{date}{01} \bibinfo{year}{2017}), \bibinfo{pages}{115--118}.
\newblock
\urldef\tempurl%
\url{http://dx.doi.org/10.1038/nature21056}
\showURL{%
\tempurl}


\bibitem[\protect\citeauthoryear{Evtimov, Eykholt, Fernandes, Kohno, Li,
  Prakash, Rahmati, and Song}{Evtimov et~al\mbox{.}}{2017}]%
        {adversarial-1}
\bibfield{author}{\bibinfo{person}{Ivan Evtimov}, \bibinfo{person}{Kevin
  Eykholt}, \bibinfo{person}{Earlence Fernandes}, \bibinfo{person}{Tadayoshi
  Kohno}, \bibinfo{person}{Bo Li}, \bibinfo{person}{Atul Prakash},
  \bibinfo{person}{Amir Rahmati}, {and} \bibinfo{person}{Dawn Song}.}
  \bibinfo{year}{2017}\natexlab{}.
\newblock \showarticletitle{Robust Physical-World Attacks on Machine Learning
  Models}.
\newblock \bibinfo{journal}{\emph{CoRR}}  \bibinfo{volume}{abs/1707.08945}
  (\bibinfo{year}{2017}).
\newblock
\showeprint[arxiv]{1707.08945}
\urldef\tempurl%
\url{http://arxiv.org/abs/1707.08945}
\showURL{%
\tempurl}


\bibitem[\protect\citeauthoryear{Forum}{Forum}{1994}]%
        {MPI}
\bibfield{author}{\bibinfo{person}{Message~P Forum}.}
  \bibinfo{year}{1994}\natexlab{}.
\newblock \bibinfo{booktitle}{\emph{MPI: A Message-Passing Interface
  Standard}}.
\newblock \bibinfo{type}{{T}echnical {R}eport}. \bibinfo{address}{Knoxville,
  TN, USA}.
\newblock


\bibitem[\protect\citeauthoryear{Gao, Avigad, and Clarke}{Gao
  et~al\mbox{.}}{2012}]%
        {delta-complete}
\bibfield{author}{\bibinfo{person}{Sicun Gao}, \bibinfo{person}{Jeremy Avigad},
  {and} \bibinfo{person}{Edmund~M. Clarke}.} \bibinfo{year}{2012}\natexlab{}.
\newblock \showarticletitle{{$\delta$}-complete Decision Procedures for
  Satisfiability over the Reals}. In \bibinfo{booktitle}{\emph{Proceedings of
  the 6th International Joint Conference on Automated Reasoning}}
  \emph{(\bibinfo{series}{IJCAR'12})}. \bibinfo{publisher}{Springer-Verlag},
  \bibinfo{address}{Berlin, Heidelberg}, \bibinfo{pages}{286--300}.
\newblock
\showISBNx{978-3-642-31364-6}
\urldef\tempurl%
\url{https://doi.org/10.1007/978-3-642-31365-3_23}
\showDOI{\tempurl}


\bibitem[\protect\citeauthoryear{Garg, L{\"o}ding, Madhusudan, and Neider}{Garg
  et~al\mbox{.}}{2014}]%
        {garg2014ice}
\bibfield{author}{\bibinfo{person}{Pranav Garg}, \bibinfo{person}{Christof
  L{\"o}ding}, \bibinfo{person}{P Madhusudan}, {and} \bibinfo{person}{Daniel
  Neider}.} \bibinfo{year}{2014}\natexlab{}.
\newblock \showarticletitle{ICE: A robust framework for learning invariants}.
  In \bibinfo{booktitle}{\emph{International Conference on Computer Aided
  Verification}}. Springer, \bibinfo{pages}{69--87}.
\newblock


\bibitem[\protect\citeauthoryear{Gehr, Mirman, Drachsler-Cohen, Tsankov,
  Chaudhuri, and Vechev}{Gehr et~al\mbox{.}}{2018}]%
        {ai2}
\bibfield{author}{\bibinfo{person}{Timon Gehr}, \bibinfo{person}{Matthew
  Mirman}, \bibinfo{person}{Dana Drachsler-Cohen}, \bibinfo{person}{Petar
  Tsankov}, \bibinfo{person}{Swarat Chaudhuri}, {and} \bibinfo{person}{M.
  Vechev}.} \bibinfo{year}{2018}\natexlab{}.
\newblock \showarticletitle{AI2: Safety and Robustness Certification of Neural
  Networks with Abstract Interpretation}. In \bibinfo{booktitle}{\emph{2018
  IEEE Symposium on Security and Privacy (SP)}}. \bibinfo{publisher}{IEEE},
  \bibinfo{address}{San Francisco, CA, USA}, \bibinfo{pages}{3--18}.
\newblock
\showISSN{2375-1207}
\urldef\tempurl%
\url{https://doi.org/10.1109/SP.2018.00058}
\showDOI{\tempurl}


\bibitem[\protect\citeauthoryear{Ghorbal, Goubault, and Putot}{Ghorbal
  et~al\mbox{.}}{2009}]%
        {zonotope}
\bibfield{author}{\bibinfo{person}{Khalil Ghorbal}, \bibinfo{person}{Eric
  Goubault}, {and} \bibinfo{person}{Sylvie Putot}.}
  \bibinfo{year}{2009}\natexlab{}.
\newblock \showarticletitle{The Zonotope Abstract Domain Taylor1+}. In
  \bibinfo{booktitle}{\emph{Proceedings of the 21st International Conference on
  Computer Aided Verification}} \emph{(\bibinfo{series}{CAV '09})}.
  \bibinfo{publisher}{Springer-Verlag}, \bibinfo{address}{Berlin, Heidelberg},
  \bibinfo{pages}{627--633}.
\newblock
\showISBNx{978-3-642-02657-7}
\urldef\tempurl%
\url{https://doi.org/10.1007/978-3-642-02658-4_47}
\showDOI{\tempurl}


\bibitem[\protect\citeauthoryear{Gong and Poellabauer}{Gong and
  Poellabauer}{2018}]%
        {gong2018overview}
\bibfield{author}{\bibinfo{person}{Yuan Gong} {and} \bibinfo{person}{Christian
  Poellabauer}.} \bibinfo{year}{2018}\natexlab{}.
\newblock \showarticletitle{An Overview of Vulnerabilities of Voice Controlled
  Systems}.
\newblock \bibinfo{journal}{\emph{arXiv preprint arXiv:1803.09156}}
  (\bibinfo{year}{2018}).
\newblock


\bibitem[\protect\citeauthoryear{Goodfellow, Shlens, and Szegedy}{Goodfellow
  et~al\mbox{.}}{2015}]%
        {GoodfellowSS14}
\bibfield{author}{\bibinfo{person}{Ian~J. Goodfellow},
  \bibinfo{person}{Jonathon Shlens}, {and} \bibinfo{person}{Christian
  Szegedy}.} \bibinfo{year}{2015}\natexlab{}.
\newblock \showarticletitle{Explaining and Harnessing Adversarial Examples}. In
  \bibinfo{booktitle}{\emph{International Conference on Learning
  Representations}}.
\newblock


\bibitem[\protect\citeauthoryear{Gopinath, Katz, Pasareanu, and
  Barrett}{Gopinath et~al\mbox{.}}{2017}]%
        {gopinath2017deepsafe}
\bibfield{author}{\bibinfo{person}{Divya Gopinath}, \bibinfo{person}{Guy Katz},
  \bibinfo{person}{Corina~S Pasareanu}, {and} \bibinfo{person}{Clark Barrett}.}
  \bibinfo{year}{2017}\natexlab{}.
\newblock \showarticletitle{Deepsafe: A data-driven approach for checking
  adversarial robustness in neural networks}.
\newblock \bibinfo{journal}{\emph{arXiv preprint arXiv:1710.00486}}
  (\bibinfo{year}{2017}).
\newblock


\bibitem[\protect\citeauthoryear{Grosse, Papernot, Manoharan, Backes, and
  McDaniel}{Grosse et~al\mbox{.}}{2017}]%
        {grosse2017adversarial}
\bibfield{author}{\bibinfo{person}{Kathrin Grosse}, \bibinfo{person}{Nicolas
  Papernot}, \bibinfo{person}{Praveen Manoharan}, \bibinfo{person}{Michael
  Backes}, {and} \bibinfo{person}{Patrick McDaniel}.}
  \bibinfo{year}{2017}\natexlab{}.
\newblock \showarticletitle{Adversarial examples for malware detection}. In
  \bibinfo{booktitle}{\emph{European Symposium on Research in Computer
  Security}}. Springer, \bibinfo{pages}{62--79}.
\newblock


\bibitem[\protect\citeauthoryear{Grosse, Papernot, Manoharan, Backes, and
  McDaniel}{Grosse et~al\mbox{.}}{2016}]%
        {GrossePM0M16}
\bibfield{author}{\bibinfo{person}{Kathrin Grosse}, \bibinfo{person}{Nicolas
  Papernot}, \bibinfo{person}{Praveen Manoharan}, \bibinfo{person}{Michael
  Backes}, {and} \bibinfo{person}{Patrick~D. McDaniel}.}
  \bibinfo{year}{2016}\natexlab{}.
\newblock \showarticletitle{Adversarial Perturbations Against Deep Neural
  Networks for Malware Classification}.
\newblock \bibinfo{journal}{\emph{CoRR}}  \bibinfo{volume}{abs/1606.04435}
  (\bibinfo{year}{2016}).
\newblock
\urldef\tempurl%
\url{http://arxiv.org/abs/1606.04435}
\showURL{%
\tempurl}


\bibitem[\protect\citeauthoryear{He, Zhang, Ren, and Sun}{He
  et~al\mbox{.}}{2016}]%
        {image-recognition-1}
\bibfield{author}{\bibinfo{person}{Kaiming He}, \bibinfo{person}{Xiangyu
  Zhang}, \bibinfo{person}{Shaoqing Ren}, {and} \bibinfo{person}{Jian Sun}.}
  \bibinfo{year}{2016}\natexlab{}.
\newblock \showarticletitle{Deep Residual Learning for Image Recognition}. In
  \bibinfo{booktitle}{\emph{2016 IEEE Conference on Computer Vision and Pattern
  Recognition (CVPR)}}. \bibinfo{pages}{770--778}.
\newblock
\showISSN{1063-6919}
\urldef\tempurl%
\url{https://doi.org/10.1109/CVPR.2016.90}
\showDOI{\tempurl}


\bibitem[\protect\citeauthoryear{Heo, Oh, and Yang}{Heo et~al\mbox{.}}{2016}]%
        {heo2016learning}
\bibfield{author}{\bibinfo{person}{Kihong Heo}, \bibinfo{person}{Hakjoo Oh},
  {and} \bibinfo{person}{Hongseok Yang}.} \bibinfo{year}{2016}\natexlab{}.
\newblock \showarticletitle{Learning a variable-clustering strategy for octagon
  from labeled data generated by a static analysis}. In
  \bibinfo{booktitle}{\emph{International Static Analysis Symposium}}.
  Springer, \bibinfo{pages}{237--256}.
\newblock


\bibitem[\protect\citeauthoryear{Huang, Kwiatkowska, Wang, and Wu}{Huang
  et~al\mbox{.}}{2016}]%
        {HuangKWW16}
\bibfield{author}{\bibinfo{person}{Xiaowei Huang}, \bibinfo{person}{Marta
  Kwiatkowska}, \bibinfo{person}{Sen Wang}, {and} \bibinfo{person}{Min Wu}.}
  \bibinfo{year}{2016}\natexlab{}.
\newblock \showarticletitle{Safety Verification of Deep Neural Networks}.
\newblock \bibinfo{journal}{\emph{CoRR}}  \bibinfo{volume}{abs/1610.06940}
  (\bibinfo{year}{2016}).
\newblock


\bibitem[\protect\citeauthoryear{Julian, Lopez, Brush, Owen, and
  Kochenderfer}{Julian et~al\mbox{.}}{2016}]%
        {uas}
\bibfield{author}{\bibinfo{person}{Kyle~D. Julian}, \bibinfo{person}{Jessica
  Lopez}, \bibinfo{person}{Jeffrey~S. Brush}, \bibinfo{person}{Michael~P.
  Owen}, {and} \bibinfo{person}{Mykel~J. Kochenderfer}.}
  \bibinfo{year}{2016}\natexlab{}.
\newblock \showarticletitle{Policy compression for aircraft collision avoidance
  systems}. In \bibinfo{booktitle}{\emph{2016 IEEE/AIAA 35th Digital Avionics
  Systems Conference (DASC)}}. \bibinfo{pages}{1--10}.
\newblock
\showISSN{2155-7209}
\urldef\tempurl%
\url{https://doi.org/10.1109/DASC.2016.7778091}
\showDOI{\tempurl}


\bibitem[\protect\citeauthoryear{Katz, Barrett, Dill, Julian, and
  J.Kochenderfer}{Katz et~al\mbox{.}}{2017a}]%
        {reluplex}
\bibfield{author}{\bibinfo{person}{Guy Katz}, \bibinfo{person}{Clark Barrett},
  \bibinfo{person}{David~L. Dill}, \bibinfo{person}{Kyle Julian}, {and}
  \bibinfo{person}{Mykel J.Kochenderfer}.} \bibinfo{year}{2017}\natexlab{a}.
\newblock \showarticletitle{Reluplex: An Efficient SMT Solver for Verifying
  Deep Neural Networks}. In \bibinfo{booktitle}{\emph{Proceedings of the 29th
  International Conference On Computer Aided Verification}}.
\newblock


\bibitem[\protect\citeauthoryear{Katz, Barrett, Dill, Julian, and
  Kochenderfer}{Katz et~al\mbox{.}}{2017b}]%
        {KatzBDJK17}
\bibfield{author}{\bibinfo{person}{Guy Katz}, \bibinfo{person}{Clark~W.
  Barrett}, \bibinfo{person}{David~L. Dill}, \bibinfo{person}{Kyle Julian},
  {and} \bibinfo{person}{Mykel~J. Kochenderfer}.}
  \bibinfo{year}{2017}\natexlab{b}.
\newblock \showarticletitle{Reluplex: An Efficient {SMT} Solver for Verifying
  Deep Neural Networks}.
\newblock \bibinfo{journal}{\emph{CoRR}}  \bibinfo{volume}{abs/1702.01135}
  (\bibinfo{year}{2017}).
\newblock


\bibitem[\protect\citeauthoryear{Krizhevsky}{Krizhevsky}{2009}]%
        {CIFAR}
\bibfield{author}{\bibinfo{person}{Alex Krizhevsky}.}
  \bibinfo{year}{2009}\natexlab{}.
\newblock \bibinfo{booktitle}{\emph{Learning Multiple Layers of Features from
  Tiny Images}}.
\newblock \bibinfo{type}{{T}echnical {R}eport}.
\newblock


\bibitem[\protect\citeauthoryear{Krizhevsky, Sutskever, and Hinton}{Krizhevsky
  et~al\mbox{.}}{2012}]%
        {image-recognition-2}
\bibfield{author}{\bibinfo{person}{Alex Krizhevsky}, \bibinfo{person}{Ilya
  Sutskever}, {and} \bibinfo{person}{Geoffrey~E Hinton}.}
  \bibinfo{year}{2012}\natexlab{}.
\newblock \showarticletitle{ImageNet Classification with Deep Convolutional
  Neural Networks}.
\newblock In \bibinfo{booktitle}{\emph{Advances in Neural Information
  Processing Systems 25}}, \bibfield{editor}{\bibinfo{person}{F.~Pereira},
  \bibinfo{person}{C.~J.~C. Burges}, \bibinfo{person}{L.~Bottou}, {and}
  \bibinfo{person}{K.~Q. Weinberger}} (Eds.). \bibinfo{publisher}{Curran
  Associates, Inc.}, \bibinfo{pages}{1097--1105}.
\newblock
\urldef\tempurl%
\url{http://papers.nips.cc/paper/4824-imagenet-classification-with-deep-convolutional-neural-networks.pdf}
\showURL{%
\tempurl}


\bibitem[\protect\citeauthoryear{Kurakin, Goodfellow, and Bengio}{Kurakin
  et~al\mbox{.}}{2016}]%
        {KurakinAdversarialScale}
\bibfield{author}{\bibinfo{person}{Alexey Kurakin}, \bibinfo{person}{Ian~J.
  Goodfellow}, {and} \bibinfo{person}{Samy Bengio}.}
  \bibinfo{year}{2016}\natexlab{}.
\newblock \showarticletitle{Adversarial Machine Learning at Scale}.
\newblock \bibinfo{journal}{\emph{CoRR}}  \bibinfo{volume}{abs/1611.01236}
  (\bibinfo{year}{2016}).
\newblock
\showeprint[arxiv]{1611.01236}
\urldef\tempurl%
\url{http://arxiv.org/abs/1611.01236}
\showURL{%
\tempurl}


\bibitem[\protect\citeauthoryear{Lecun, Bottou, Bengio, and Haffner}{Lecun
  et~al\mbox{.}}{1998}]%
        {LeNet}
\bibfield{author}{\bibinfo{person}{Yann Lecun}, \bibinfo{person}{Léon Bottou},
  \bibinfo{person}{Yoshua Bengio}, {and} \bibinfo{person}{Patrick Haffner}.}
  \bibinfo{year}{1998}\natexlab{}.
\newblock \showarticletitle{Gradient-based learning applied to document
  recognition}. In \bibinfo{booktitle}{\emph{Proceedings of the IEEE}}.
  \bibinfo{pages}{2278--2324}.
\newblock


\bibitem[\protect\citeauthoryear{Liang, Tripp, and Naik}{Liang
  et~al\mbox{.}}{2011}]%
        {liang2011learning}
\bibfield{author}{\bibinfo{person}{Percy Liang}, \bibinfo{person}{Omer Tripp},
  {and} \bibinfo{person}{Mayur Naik}.} \bibinfo{year}{2011}\natexlab{}.
\newblock \showarticletitle{Learning minimal abstractions}. In
  \bibinfo{booktitle}{\emph{POPL}}, Vol.~\bibinfo{volume}{46}. ACM,
  \bibinfo{pages}{31--42}.
\newblock


\bibitem[\protect\citeauthoryear{Lomuscio and Maganti}{Lomuscio and
  Maganti}{2017}]%
        {LomuscioMILP}
\bibfield{author}{\bibinfo{person}{Alessio Lomuscio} {and}
  \bibinfo{person}{Lalit Maganti}.} \bibinfo{year}{2017}\natexlab{}.
\newblock \showarticletitle{An approach to reachability analysis for
  feed-forward ReLU neural networks}.
\newblock \bibinfo{journal}{\emph{CoRR}}  \bibinfo{volume}{abs/1706.07351}
  (\bibinfo{year}{2017}).
\newblock
\showeprint[arxiv]{1706.07351}
\urldef\tempurl%
\url{http://arxiv.org/abs/1706.07351}
\showURL{%
\tempurl}


\bibitem[\protect\citeauthoryear{Lyu, Huang, and Liang}{Lyu
  et~al\mbox{.}}{2015}]%
        {LyuHL15}
\bibfield{author}{\bibinfo{person}{Chunchuan Lyu}, \bibinfo{person}{Kaizhu
  Huang}, {and} \bibinfo{person}{Hai{-}Ning Liang}.}
  \bibinfo{year}{2015}\natexlab{}.
\newblock \showarticletitle{A Unified Gradient Regularization Family for
  Adversarial Examples}. In \bibinfo{booktitle}{\emph{2015 {IEEE} International
  Conference on Data Mining, {ICDM} 2015, Atlantic City, NJ, USA, November
  14-17, 2015}}. \bibinfo{pages}{301--309}.
\newblock


\bibitem[\protect\citeauthoryear{Madry, Makelov, Schmidt, Tsipras, and
  Vladu}{Madry et~al\mbox{.}}{2018}]%
        {madry2017towards}
\bibfield{author}{\bibinfo{person}{Aleksander Madry},
  \bibinfo{person}{Aleksandar Makelov}, \bibinfo{person}{Ludwig Schmidt},
  \bibinfo{person}{Dimitris Tsipras}, {and} \bibinfo{person}{Adrian Vladu}.}
  \bibinfo{year}{2018}\natexlab{}.
\newblock \bibinfo{title}{Towards deep learning models resistant to adversarial
  attacks}.
\newblock
\newblock


\bibitem[\protect\citeauthoryear{Martinez-Cantin}{Martinez-Cantin}{2014}]%
        {bayesopt}
\bibfield{author}{\bibinfo{person}{Ruben Martinez-Cantin}.}
  \bibinfo{year}{2014}\natexlab{}.
\newblock \showarticletitle{BayesOpt: A Bayesian Optimization Library for
  Nonlinear Optimization, Experimental Design and Bandits}.
\newblock \bibinfo{journal}{\emph{Journal of Machine Learning Research}}
  \bibinfo{volume}{15} (\bibinfo{year}{2014}), \bibinfo{pages}{3915--3919}.
\newblock
\urldef\tempurl%
\url{http://jmlr.org/papers/v15/martinezcantin14a.html}
\showURL{%
\tempurl}


\bibitem[\protect\citeauthoryear{Melis, Demontis, Biggio, Brown, Fumera, and
  Roli}{Melis et~al\mbox{.}}{2017}]%
        {melis2017deep}
\bibfield{author}{\bibinfo{person}{Marco Melis}, \bibinfo{person}{Ambra
  Demontis}, \bibinfo{person}{Battista Biggio}, \bibinfo{person}{Gavin Brown},
  \bibinfo{person}{Giorgio Fumera}, {and} \bibinfo{person}{Fabio Roli}.}
  \bibinfo{year}{2017}\natexlab{}.
\newblock \showarticletitle{Is deep learning safe for robot vision? adversarial
  examples against the icub humanoid}. In \bibinfo{booktitle}{\emph{Computer
  Vision Workshop (ICCVW), 2017 IEEE International Conference on}}. IEEE,
  \bibinfo{pages}{751--759}.
\newblock


\bibitem[\protect\citeauthoryear{Mockus}{Mockus}{2010}]%
        {bayesian-opt}
\bibfield{author}{\bibinfo{person}{Jonas Mockus}.}
  \bibinfo{year}{2010}\natexlab{}.
\newblock \bibinfo{booktitle}{\emph{Bayesian Heuristic Approach to Discrete and
  Global Optimization: Algorithms, Visualization, Software, and Applications}}.
\newblock \bibinfo{publisher}{Springer-Verlag}, \bibinfo{address}{Berlin,
  Heidelberg}.
\newblock
\showISBNx{1441947671, 9781441947673}


\bibitem[\protect\citeauthoryear{Nguyen, Yosinski, and Clune}{Nguyen
  et~al\mbox{.}}{2015}]%
        {NguyenYC15}
\bibfield{author}{\bibinfo{person}{Anh~Mai Nguyen}, \bibinfo{person}{Jason
  Yosinski}, {and} \bibinfo{person}{Jeff Clune}.}
  \bibinfo{year}{2015}\natexlab{}.
\newblock \showarticletitle{Deep neural networks are easily fooled: High
  confidence predictions for unrecognizable images}. In
  \bibinfo{booktitle}{\emph{{IEEE} Conference on Computer Vision and Pattern
  Recognition, {CVPR} 2015, Boston, MA, USA, June 7-12, 2015}}.
  \bibinfo{pages}{427--436}.
\newblock


\bibitem[\protect\citeauthoryear{Oh, Yang, and Yi}{Oh et~al\mbox{.}}{2015}]%
        {oh2015learning}
\bibfield{author}{\bibinfo{person}{Hakjoo Oh}, \bibinfo{person}{Hongseok Yang},
  {and} \bibinfo{person}{Kwangkeun Yi}.} \bibinfo{year}{2015}\natexlab{}.
\newblock \showarticletitle{Learning a strategy for adapting a program analysis
  via bayesian optimisation}. In \bibinfo{booktitle}{\emph{ACM SIGPLAN
  Notices}}, Vol.~\bibinfo{volume}{50}. ACM, \bibinfo{pages}{572--588}.
\newblock


\bibitem[\protect\citeauthoryear{Papernot, McDaniel, and Goodfellow}{Papernot
  et~al\mbox{.}}{2016}]%
        {adversarial-2}
\bibfield{author}{\bibinfo{person}{Nicolas Papernot},
  \bibinfo{person}{Patrick~D. McDaniel}, {and} \bibinfo{person}{Ian~J.
  Goodfellow}.} \bibinfo{year}{2016}\natexlab{}.
\newblock \showarticletitle{Transferability in Machine Learning: from Phenomena
  to Black-Box Attacks using Adversarial Samples}.
\newblock \bibinfo{journal}{\emph{CoRR}}  \bibinfo{volume}{abs/1605.07277}
  (\bibinfo{year}{2016}).
\newblock
\showeprint[arxiv]{1605.07277}
\urldef\tempurl%
\url{http://arxiv.org/abs/1605.07277}
\showURL{%
\tempurl}


\bibitem[\protect\citeauthoryear{Pei, Cao, Yang, and Jana}{Pei
  et~al\mbox{.}}{2017a}]%
        {deepxplore}
\bibfield{author}{\bibinfo{person}{Kexin Pei}, \bibinfo{person}{Yinzhi Cao},
  \bibinfo{person}{Junfeng Yang}, {and} \bibinfo{person}{Suman Jana}.}
  \bibinfo{year}{2017}\natexlab{a}.
\newblock \showarticletitle{DeepXplore: Automated Whitebox Testing of Deep
  Learning Systems}. In \bibinfo{booktitle}{\emph{Proceedings of the 26th
  Symposium on Operating Systems Principles}} \emph{(\bibinfo{series}{SOSP
  '17})}. \bibinfo{publisher}{ACM}, \bibinfo{address}{New York, NY, USA},
  \bibinfo{pages}{1--18}.
\newblock
\showISBNx{978-1-4503-5085-3}
\urldef\tempurl%
\url{http://doi.acm.org/10.1145/3132747.3132785}
\showURL{%
\tempurl}


\bibitem[\protect\citeauthoryear{Pei, Cao, Yang, and Jana}{Pei
  et~al\mbox{.}}{2017b}]%
        {veriviz}
\bibfield{author}{\bibinfo{person}{Kexin Pei}, \bibinfo{person}{Yinzhi Cao},
  \bibinfo{person}{Junfeng Yang}, {and} \bibinfo{person}{Suman Jana}.}
  \bibinfo{year}{2017}\natexlab{b}.
\newblock \showarticletitle{Towards Practical Verification of Machine Learning:
  The Case of Computer Vision Systems}.
\newblock \bibinfo{journal}{\emph{CoRR}}  \bibinfo{volume}{abs/1712.01785}
  (\bibinfo{year}{2017}).
\newblock
\showeprint[arxiv]{1712.01785}
\urldef\tempurl%
\url{http://arxiv.org/abs/1712.01785}
\showURL{%
\tempurl}


\bibitem[\protect\citeauthoryear{Pulina and Tacchella}{Pulina and
  Tacchella}{2010}]%
        {PulinaT10}
\bibfield{author}{\bibinfo{person}{Luca Pulina} {and} \bibinfo{person}{Armando
  Tacchella}.} \bibinfo{year}{2010}\natexlab{}.
\newblock \showarticletitle{An Abstraction-Refinement Approach to Verification
  of Artificial Neural Networks}. In \bibinfo{booktitle}{\emph{Computer Aided
  Verification, 22nd International Conference, {CAV} 2010, Edinburgh, UK, July
  15-19, 2010. Proceedings}}. \bibinfo{pages}{243--257}.
\newblock


\bibitem[\protect\citeauthoryear{Rasmussen and Williams}{Rasmussen and
  Williams}{2006}]%
        {gaussian-process}
\bibfield{author}{\bibinfo{person}{Carl~Edward Rasmussen} {and}
  \bibinfo{person}{Christopher K.~I. Williams}.}
  \bibinfo{year}{2006}\natexlab{}.
\newblock \bibinfo{booktitle}{\emph{Gaussian Processes for Machine Learning}}.
\newblock \bibinfo{publisher}{The MIT Press}.
\newblock
\showISBNx{0-262-18253-X}


\bibitem[\protect\citeauthoryear{Sabour, Cao, Faghri, and Fleet}{Sabour
  et~al\mbox{.}}{2015}]%
        {SabourCFF15}
\bibfield{author}{\bibinfo{person}{Sara Sabour}, \bibinfo{person}{Yanshuai
  Cao}, \bibinfo{person}{Fartash Faghri}, {and} \bibinfo{person}{David~J.
  Fleet}.} \bibinfo{year}{2015}\natexlab{}.
\newblock \showarticletitle{Adversarial Manipulation of Deep Representations}.
\newblock \bibinfo{journal}{\emph{CoRR}}  \bibinfo{volume}{abs/1511.05122}
  (\bibinfo{year}{2015}).
\newblock


\bibitem[\protect\citeauthoryear{Scheibler, Winterer, Wimmer, and
  Becker}{Scheibler et~al\mbox{.}}{2015}]%
        {ScheiblerWWB15}
\bibfield{author}{\bibinfo{person}{Karsten Scheibler}, \bibinfo{person}{Leonore
  Winterer}, \bibinfo{person}{Ralf Wimmer}, {and} \bibinfo{person}{Bernd
  Becker}.} \bibinfo{year}{2015}\natexlab{}.
\newblock \showarticletitle{Towards Verification of Artificial Neural
  Networks}. In \bibinfo{booktitle}{\emph{Methoden und Beschreibungssprachen
  zur Modellierung und Verifikation von Schaltungen und Systemen, {MBMV} 2015,
  Chemnitz, Germany, March 3-4, 2015.}} \bibinfo{pages}{30--40}.
\newblock


\bibitem[\protect\citeauthoryear{Sharma, Gupta, Hariharan, Aiken, and
  Nori}{Sharma et~al\mbox{.}}{2013}]%
        {sharma2013verification}
\bibfield{author}{\bibinfo{person}{Rahul Sharma}, \bibinfo{person}{Saurabh
  Gupta}, \bibinfo{person}{Bharath Hariharan}, \bibinfo{person}{Alex Aiken},
  {and} \bibinfo{person}{Aditya~V Nori}.} \bibinfo{year}{2013}\natexlab{}.
\newblock \showarticletitle{Verification as learning geometric concepts}. In
  \bibinfo{booktitle}{\emph{International Static Analysis Symposium}}.
  Springer, \bibinfo{pages}{388--411}.
\newblock


\bibitem[\protect\citeauthoryear{Si, Dai, Raghothaman, Naik, and Song}{Si
  et~al\mbox{.}}{2018}]%
        {si2018learning}
\bibfield{author}{\bibinfo{person}{Xujie Si}, \bibinfo{person}{Hanjun Dai},
  \bibinfo{person}{Mukund Raghothaman}, \bibinfo{person}{Mayur Naik}, {and}
  \bibinfo{person}{Le Song}.} \bibinfo{year}{2018}\natexlab{}.
\newblock \showarticletitle{Learning loop invariants for program verification}.
  In \bibinfo{booktitle}{\emph{Proceedings of the Thirty-second Conference on
  Neural Information Processing Systems}}.
\newblock


\bibitem[\protect\citeauthoryear{Singh, P\"{u}schel, and Vechev}{Singh
  et~al\mbox{.}}{2017}]%
        {polyhedra}
\bibfield{author}{\bibinfo{person}{Gagandeep Singh}, \bibinfo{person}{Markus
  P\"{u}schel}, {and} \bibinfo{person}{Martin Vechev}.}
  \bibinfo{year}{2017}\natexlab{}.
\newblock \showarticletitle{Fast Polyhedra Abstract Domain}. In
  \bibinfo{booktitle}{\emph{Proceedings of the 44th ACM SIGPLAN Symposium on
  Principles of Programming Languages}} \emph{(\bibinfo{series}{POPL 2017})}.
  \bibinfo{publisher}{ACM}, \bibinfo{address}{New York, NY, USA},
  \bibinfo{pages}{46--59}.
\newblock
\showISBNx{978-1-4503-4660-3}
\urldef\tempurl%
\url{https://doi.org/10.1145/3009837.3009885}
\showDOI{\tempurl}


\bibitem[\protect\citeauthoryear{Szegedy, Zaremba, Sutskever, Bruna, Erhan,
  Goodfellow, and Fergus}{Szegedy et~al\mbox{.}}{2013a}]%
        {szegedy2013intriguing}
\bibfield{author}{\bibinfo{person}{Christian Szegedy},
  \bibinfo{person}{Wojciech Zaremba}, \bibinfo{person}{Ilya Sutskever},
  \bibinfo{person}{Joan Bruna}, \bibinfo{person}{Dumitru Erhan},
  \bibinfo{person}{Ian Goodfellow}, {and} \bibinfo{person}{Rob Fergus}.}
  \bibinfo{year}{2013}\natexlab{a}.
\newblock \showarticletitle{Intriguing properties of neural networks}.
\newblock \bibinfo{journal}{\emph{arXiv preprint arXiv:1312.6199}}
  (\bibinfo{year}{2013}).
\newblock


\bibitem[\protect\citeauthoryear{Szegedy, Zaremba, Sutskever, Bruna, Erhan,
  Goodfellow, and Fergus}{Szegedy et~al\mbox{.}}{2013b}]%
        {SzegedyZSBEGF13}
\bibfield{author}{\bibinfo{person}{Christian Szegedy},
  \bibinfo{person}{Wojciech Zaremba}, \bibinfo{person}{Ilya Sutskever},
  \bibinfo{person}{Joan Bruna}, \bibinfo{person}{Dumitru Erhan},
  \bibinfo{person}{Ian~J. Goodfellow}, {and} \bibinfo{person}{Rob Fergus}.}
  \bibinfo{year}{2013}\natexlab{b}.
\newblock \showarticletitle{Intriguing properties of neural networks}.
\newblock \bibinfo{journal}{\emph{CoRR}}  \bibinfo{volume}{abs/1312.6199}
  (\bibinfo{year}{2013}).
\newblock


\bibitem[\protect\citeauthoryear{Tabacof and Valle}{Tabacof and Valle}{2016}]%
        {TabacofV16}
\bibfield{author}{\bibinfo{person}{Pedro Tabacof} {and}
  \bibinfo{person}{Eduardo Valle}.} \bibinfo{year}{2016}\natexlab{}.
\newblock \showarticletitle{Exploring the space of adversarial images}. In
  \bibinfo{booktitle}{\emph{2016 International Joint Conference on Neural
  Networks, {IJCNN} 2016, Vancouver, BC, Canada, July 24-29, 2016}}.
  \bibinfo{pages}{426--433}.
\newblock


\bibitem[\protect\citeauthoryear{Tjeng and Tedrake}{Tjeng and Tedrake}{2017}]%
        {TjengMILP}
\bibfield{author}{\bibinfo{person}{Vincent Tjeng} {and} \bibinfo{person}{Russ
  Tedrake}.} \bibinfo{year}{2017}\natexlab{}.
\newblock \showarticletitle{Verifying Neural Networks with Mixed Integer
  Programming}.
\newblock \bibinfo{journal}{\emph{CoRR}}  \bibinfo{volume}{abs/1711.07356}
  (\bibinfo{year}{2017}).
\newblock
\showeprint[arxiv]{1711.07356}
\urldef\tempurl%
\url{http://arxiv.org/abs/1711.07356}
\showURL{%
\tempurl}


\bibitem[\protect\citeauthoryear{Wang, Pei, Whitehouse, Yang, and Jana}{Wang
  et~al\mbox{.}}{2018}]%
        {reluval}
\bibfield{author}{\bibinfo{person}{Shiqi Wang}, \bibinfo{person}{Kexin Pei},
  \bibinfo{person}{Justin Whitehouse}, \bibinfo{person}{Junfeng Yang}, {and}
  \bibinfo{person}{Suman Jana}.} \bibinfo{year}{2018}\natexlab{}.
\newblock \showarticletitle{Formal Security Analysis of Neural Networks using
  Symbolic Intervals}. In \bibinfo{booktitle}{\emph{27th {USENIX} Security
  Symposium ({USENIX} Security 18)}}. \bibinfo{publisher}{{USENIX}
  Association}, \bibinfo{address}{Baltimore, MD}, \bibinfo{pages}{1599--1614}.
\newblock
\showISBNx{978-1-931971-46-1}
\urldef\tempurl%
\url{https://www.usenix.org/conference/usenixsecurity18/presentation/wang-shiqi}
\showURL{%
\tempurl}


\bibitem[\protect\citeauthoryear{Wu, Schuster, Chen, Le, Norouzi, Macherey,
  Krikun, Cao, Gao, Macherey, Klingner, Shah, Johnson, Liu, Kaiser, Gouws,
  Kato, Kudo, Kazawa, Stevens, Kurian, Patil, Wang, Young, Smith, Riesa,
  Rudnick, Vinyals, Corrado, Hughes, and Dean}{Wu et~al\mbox{.}}{2016}]%
        {machine-translation}
\bibfield{author}{\bibinfo{person}{Yonghui Wu}, \bibinfo{person}{Mike
  Schuster}, \bibinfo{person}{Zhifeng Chen}, \bibinfo{person}{Quoc~V. Le},
  \bibinfo{person}{Mohammad Norouzi}, \bibinfo{person}{Wolfgang Macherey},
  \bibinfo{person}{Maxim Krikun}, \bibinfo{person}{Yuan Cao},
  \bibinfo{person}{Qin Gao}, \bibinfo{person}{Klaus Macherey},
  \bibinfo{person}{Jeff Klingner}, \bibinfo{person}{Apurva Shah},
  \bibinfo{person}{Melvin Johnson}, \bibinfo{person}{Xiaobing Liu},
  \bibinfo{person}{Lukasz Kaiser}, \bibinfo{person}{Stephan Gouws},
  \bibinfo{person}{Yoshikiyo Kato}, \bibinfo{person}{Taku Kudo},
  \bibinfo{person}{Hideto Kazawa}, \bibinfo{person}{Keith Stevens},
  \bibinfo{person}{George Kurian}, \bibinfo{person}{Nishant Patil},
  \bibinfo{person}{Wei Wang}, \bibinfo{person}{Cliff Young},
  \bibinfo{person}{Jason Smith}, \bibinfo{person}{Jason Riesa},
  \bibinfo{person}{Alex Rudnick}, \bibinfo{person}{Oriol Vinyals},
  \bibinfo{person}{Greg Corrado}, \bibinfo{person}{Macduff Hughes}, {and}
  \bibinfo{person}{Jeffrey Dean}.} \bibinfo{year}{2016}\natexlab{}.
\newblock \showarticletitle{Google's Neural Machine Translation System:
  Bridging the Gap between Human and Machine Translation}.
\newblock \bibinfo{journal}{\emph{CoRR}}  \bibinfo{volume}{abs/1609.08144}
  (\bibinfo{year}{2016}).
\newblock
\urldef\tempurl%
\url{http://arxiv.org/abs/1609.08144}
\showURL{%
\tempurl}


\bibitem[\protect\citeauthoryear{Yuan, He, Zhu, Bhat, and Li}{Yuan
  et~al\mbox{.}}{2017}]%
        {advsurvey}
\bibfield{author}{\bibinfo{person}{Xiaoyong Yuan}, \bibinfo{person}{Pan He},
  \bibinfo{person}{Qile Zhu}, \bibinfo{person}{Rajendra~Rana Bhat}, {and}
  \bibinfo{person}{Xiaolin Li}.} \bibinfo{year}{2017}\natexlab{}.
\newblock \showarticletitle{Adversarial Examples: Attacks and Defenses for Deep
  Learning}.
\newblock \bibinfo{journal}{\emph{CoRR}}  \bibinfo{volume}{abs/1712.07107}
  (\bibinfo{year}{2017}).
\newblock
\showeprint[arxiv]{1712.07107}
\urldef\tempurl%
\url{http://arxiv.org/abs/1712.07107}
\showURL{%
\tempurl}


\bibitem[\protect\citeauthoryear{Yuan, Lu, Wang, and Xue}{Yuan
  et~al\mbox{.}}{2014}]%
        {malware-1}
\bibfield{author}{\bibinfo{person}{Zhenlong Yuan}, \bibinfo{person}{Yongqiang
  Lu}, \bibinfo{person}{Zhaoguo Wang}, {and} \bibinfo{person}{Yibo Xue}.}
  \bibinfo{year}{2014}\natexlab{}.
\newblock \showarticletitle{Droid-Sec: Deep Learning in Android Malware
  Detection}. In \bibinfo{booktitle}{\emph{Proceedings of the 2014 ACM
  Conference on SIGCOMM}} \emph{(\bibinfo{series}{SIGCOMM '14})}.
  \bibinfo{publisher}{ACM}, \bibinfo{pages}{371--372}.
\newblock
\showISBNx{978-1-4503-2836-4}
\urldef\tempurl%
\url{http://doi.acm.org/10.1145/2619239.2631434}
\showURL{%
\tempurl}


\bibitem[\protect\citeauthoryear{Yuan, Lu, and Xue}{Yuan et~al\mbox{.}}{2016}]%
        {malware-2}
\bibfield{author}{\bibinfo{person}{Zhenlong Yuan}, \bibinfo{person}{Yongqiang
  Lu}, {and} \bibinfo{person}{Yibo Xue}.} \bibinfo{year}{2016}\natexlab{}.
\newblock \showarticletitle{Droiddetector: android malware characterization and
  detection using deep learning}.
\newblock \bibinfo{journal}{\emph{Tsinghua Science and Technology}}
  \bibinfo{volume}{21}, \bibinfo{number}{1} (\bibinfo{date}{Feb}
  \bibinfo{year}{2016}), \bibinfo{pages}{114--123}.
\newblock
\showISSN{1007-0214}
\urldef\tempurl%
\url{https://doi.org/10.1109/TST.2016.7399288}
\showDOI{\tempurl}


\end{thebibliography}

\ifextended
\appendix

\section{Proofs}
In this section we present the proofs of the theorems in
Section~\ref{sec:theorems}. For convenience, the assumptions and theorem
statements have been copied.

\begin{repdefinition}{def:delta-complete}
For a given network $\network$, input region $I$, target class $M$, and
$\delta>0$, a $\delta$-counterexample is a point $x \in I$ such that for some
$j$ with $1 \le j \le m$ and $j \neq M$, $\network(x)_M - \network(x)_j \le
\delta$.
\end{repdefinition}

\begin{repdefinition}{def:diameter}
  For any set $X \subseteq \real^n$, its \emph{diameter} $D(X)$ is defined as
  \[D(X) = \sup\{\|x_1 - x_2\|_2 \mid x_1,x_2 \in X\}\]
  if this value exists. Otherwise the set is said to have infinite diameter.
\end{repdefinition}

\begin{repassumption}{asm:split-reduces-size}
  There exists some $\lambda \in \real$ with $0 < \lambda
  < 1$ such that for any network $\network$, input region $I$, and point $x_* \in
  I$, if $(I_1, I_2) = \Call{Refine}{N, I, x_*}$, then $D(I_1) < \lambda
  D(I)$ and $D(I_2) < \lambda D(I)$.
\end{repassumption}

\begin{repassumption}{asm:finite-overapprox}
  Let $\network^\#$ be the abstract transformer representing a network
  $\network$. Let $a$ be an element of the abstract domain representing the
  input region $I$.  We assume there exists some $K_\network \in \real$ such
  that $D(\gamma(\network^\#(\alpha(I)))) < K_\network D(I)$.
\end{repassumption}

\begin{reptheorem}{thm:termination}
Consider the variant of Algorithm~\ref{alg:refinement} where the predicate at line 3 is replaced with
Eq.~\ref{eq:modification}. Then, the verification algorithm always terminates
under Assumptions~\ref{asm:split-reduces-size} and~\ref{asm:finite-overapprox}.
\end{reptheorem}
\begin{proof}
  To improve readibility, we define $F(I_k) = \gamma(\network^\#(\alpha(I_k)))$.

  By Assumption~\ref{asm:split-reduces-size} there exists some $\lambda \in
  \real$ with $0 < \lambda < 1$ such that for any input region $I'$, splitting
  $I'$ with \Call{Refine}{} yields regions $I_1'$ and $I_2'$ with $D(I_1') < \lambda D(I')$
  and $D(I_2') < \lambda D(I')$. Because there is one split for each node in the
  recursion tree, at a recursion depth of $k$, the region $I_k$ under
  consideration has diameter $D(I_k) < \lambda^k D(I)$. By
  Assumption~\ref{asm:finite-overapprox}, there exists some $K_\network$ such
  that $D(F(I_k)) < K_\network D(I_k)$. Notice that when
  \[k > \log_\lambda\left(\frac{\delta}{2 K_\network D(I)}\right)\]
  we must have $D(F(I_k)) < \delta / 2$.

  We will now show that when $k$ satisfies the preceding condition,
  Algorithm~\ref{alg:refinement} must terminate without recurring. In this case, suppose $x_*$ is the point returned by the call to
  \Call{Minimize}{} and the algorithm does not terminate. Then
  $\objective(x_*)>\delta$ and in particular, $\network(x_*)_K-\network(x_*)_i >
  \delta$. Since \Call{Analyze}{} is sound, we must have $\network(x_*) \in
  F(I_k)$. Then since $D(F(I_k)) < \delta / 2$, we must have that for any point
  $y' \in F(I_k)$, $\|\network(x_*) - y'\|_2 < \delta / 2$. In particular, for
  all $i$, $|(\network(x_*))_i - y'_i| < \delta / 2$, so $y'_i >
  (\network(x_*))_i - \delta / 2$ and $y'_i < (\network(x_*))_i + \delta / 2$.
  Then, for all $i$,
  \begin{align*}
    y_K - y_i &> ((\network(x_*))_K - \delta / 2) - (\network(x_*))_i + \delta /
    2) \\
    &= ((\network(x_*))_K - (\network(x_*))_i) - \delta \\
    &> 0
  \end{align*}
  Thus, for each point $y' \in F(I_k)$, we have $y'_K > y'_i$. Since $y'$ ranges
  over the \emph{overapproximated} output produced by the abstract interpreter,
  this exactly satisfies the condition which \Call{Analyze}{} is checking, so
  \Call{Analyze}{} must return {\rm Verified}. Therefore, the maximum recursion
  depth of Algorithm~\ref{alg:refinement} is bounded, so it must terminate.
\end{proof}

\begin{reptheorem}{thm:completeness}
Consider the variant of Algorithm~\ref{alg:refinement} where the predicate at line 3 is replaced with
Eq.~\ref{eq:modification}. Then, the verification algorithm is
$\delta$-complete, meaning that if the algorithm does not return ``Verified''
then the return value is a {$\delta$-counterexample} for the property.
\end{reptheorem}
\begin{proof}
  First note that by Theorem~\ref{thm:termination},
  Algorithm~\ref{alg:refinement} must terminate. Therefore the algorithm must
  return some value, and we can prove this theorem by analyzing the possible
  return values.
  There are five places at which Algorithm~\ref{alg:refinement} can terminate:
  lines 4, 8, 12, 15, and 16. We only care about the case where the algorithm does
  not return ``Verified'' so we can ignore lines 8 and 16. The return at line 4
  is only reached after checking that $x_*$ is a {$\delta$-counterexample}, so
  clearly if that return statement is used then the algorithm returns a
  {$\delta$-counterexample}. This leaves the return statements at lines 12 and
  15. We suppose by induction that the recursive calls at lines 10 and 13 are
  $\delta$-complete. Then if $r_1$ is not {\rm Verified}, it must be a
  {$\delta$-counterexample}. Thus the return statement at line 12 also returns a
  {$\delta$-counterexample}. Similarly, if $r_2$ is not {\rm Verified}, then it
  is a {$\delta$-counterexample}, so line 15 returns a
  {$\delta$-counterexample}.
\end{proof}

\fi

\end{document}